\newcommand{\red}{\textcolor{red}} 
\newcommand{\dgreen}{\textcolor{black!40!green}}
\newcommand{\cB}{\mathcal{B}} 
\newcommand{\cD}{\mathcal{D}} 
\newcommand{\cX}{\mathcal{X}} 
\newcommand{\cS}{\mathcal{S}} 
\newcommand{\cY}{\mathcal{Y}} 
\newcommand{\cE}{\mathcal{E}} 
\newcommand{\cH}{\mathcal{H}} 
\newcommand{\bP}{\mathbb{P}}
\newcommand{\oell}{\overline{\ell}}
\newcommand{\ty}{\mathtt{y}}
\newcommand{\slca}{S$\ell$-CA}
\DeclareMathOperator{\Pre}{Pre}
\newtheorem{definition}{Definition}
\newtheorem{proposition}{Proposition}
\newtheorem{theorem}{Theorem}
\newtheorem{assumption}{Assumption}
\newtheorem{corollary}{Corollary}
\theoremstyle{remark}
\newtheorem{remark}{Remark}
\newtheorem{example}{Example}
\newtheorem*{proof}{Proof}
\begin{document}

\title{Data-driven Abstractions for Verification of Deterministic Systems}

\author{Rudi Coppola}
\authornote{Both authors contributed equally to this research.}
\email{r.coppola@tudelft.nl}
\orcid{0000-0002-1876-6827}

\author{Andrea Peruffo}
\authornotemark[1]
\email{a.peruffo@tudelft.nl}
\orcid{0000-0002-7767-2935}

\author{Manuel Mazo Jr.}
\email{m.mazo@tudelft.nl}
\orcid{0000-0002-5638-5283}

\affiliation{%
  \institution{TU Delft}
  \city{Delft}
  \country{Netherlands}
}


\begin{abstract}
    A common technique to verify complex logic specifications for dynamical systems is the construction of symbolic abstractions: simpler, finite-state models whose behaviour mimics the one of the systems of interest. 
    Typically, abstractions are constructed exploiting an accurate knowledge of the underlying model: in real-life applications, this may be a costly assumption. 
    By sampling random $\ell$-step trajectories of an unknown system, we build an abstraction based on the notion of $\ell$-completeness. 
    We newly define the notion of probabilistic behavioural inclusion, and provide probably approximately correct (PAC) guarantees that this abstraction includes all behaviours of the concrete system, 
    for finite and infinite time horizon, 
    leveraging the scenario theory for non convex problems. 
    Our method is then tested on several numerical benchmarks.
\end{abstract}
\keywords{verification, data-driven, abstractions}

\maketitle


\section{Introduction}
\label{sec:intro}

Data-driven modeling and analysis is living a new renaissance with the advances in machine-learning and artificial intelligence enabled by unprecedented computing power. The field of system's verification, aiming at providing formal performance and safety guarantees is not alien to this trend. Recent work has been focused on the use of collected data from a system to derive directly (with no model involved) barrier functions certifying invariance~\cite{akella2022barrier, salamati2021data}, or finite abstractions to verify and synthesize controllers~\cite{devonport2021symbolic, majumdar2020abstraction, sadraddini2018formal}. A popular approach is to employ scenario-based optimization techniques to derive probably approximately correct (PAC) guarantees on the performance metric of interest.

The use of scenario-based optimization 
requires independent samples, generated from the probability distribution that drives the system's uncertainty.
A (simpler) special case considered in many works \cite{makdesi2022data, devonport2021symbolic, lavaei2022data, wang2020data, wang2021scenario} is that of deterministic systems (for which a model is not available), in which the only uncertainty is their initialization, i.e. the initial state is drawn from some probability distribution and determines the whole trajectory of the system.  
In such a system the independence of samples can be derived by sampling the initial state in an independent fashion, 
e.g. typically a uniform distribution is selected for compact initial sets. 
However, when employing data to construct finite abstractions, or verifying barrier functions, the necessary data are individual transitions of the system: 
thus, to aquire independent samples of transitions implies acquiring samples from independent trajectories.
This means that to collect data to \emph{learn} e.g. a barrier function for a set larger than the initial one, one may have to discard large portions of trajectories to retain independence of samples.  
%
By using 
one-step transitions, the scenario-based approach provides guarantees for the satisfaction of \emph{one-step} properties, e.g. decrease of a function across transitions, provided that transitions are indeed sampled independently.  
However, one is typically interested on inferring long (even infinite) horizon specifications from one-step properties. 
These approaches can reason only about finite horizon notions, as for infinite horizon properties the probability of satisfaction becomes trivially zero.

\textbf{Contributions.}
Addressing this limitation is the main objective of this work. 
We consider deterministic systems with unknown dynamics and uncertainty in their initialization. 
Our approach provides a construction of data-driven finite abstractions, built on the notions of behavioural inclusions and $\ell$-complete behaviours. 
We introduce the notion of probabilistic behavioural inclusion, which is instrumental in describing the relation between a deterministic (but randomly sampled) model and a transition system constructed upon the collected system's behaviours. 
Leveraging the non-convex scenario theory, 
we establish PAC guarantees for the inclusion of the concrete system's (in)finite behaviours in those of the abstractions. 
This enables the verification of  infinite horizon properties while retaining the PAC guarantees.  
%
%
We show throughout this work that the infinite horizon verification problem is equivalent to providing PAC guarantees on having observed the full support of an unknown probability mass function (with finite support) from a set of i.i.d. realizations. 
In other words, the PAC procedure bounds the probability of witnessing a new, unseen system behaviour. 

\textbf{Related Work.} 
The notion of system abstraction is generally related to a model-based technique, where an expert or an automated technique exploits the knowledge of a concrete system to extract some prominent features \cite{tabuada2009verification, baier2008principles} in order to verify some system's properties or to synthesise a policy (or controller) that ensures a user-defined requirement. 
When the model is unknown, one may resort to model identification (e.g. \cite{haesaert2017data}) to then verify the identified model via traditional formal techniques, in a two-step procedure. 
Recently however, a few works propose a one-step method, i.e. the use of PAC guarantees to directly synthesise an abstraction from data, with guarantees of correctness, without the need to identify an underlying model. 
In \cite{cubuktepe2020scenario, badings2021sampling, lavaei2022constructing} a sampled-based interval MDP is provided, employing the scenario approach to bound the transition probabilities of a stochastic dynamical model.
In \cite{devonport2021symbolic,lavaei2022data}, the authors define a PAC alternating simulation relationship between a symbolic abstraction and an underlying deterministic system, using one-step transitions. 

In \cite{makdesi2022data}, PAC over-approximations of monotone systems are computed, which are then used to build models for unknown monotone systems.
Finally,
\cite{kazemi2022data} computes the growth rate of a system from data, which is then used 
to construct a model abstraction and synthesise a controller.
The use of data-driven $\ell$-complete models is briefly presented in \cite{peruffo2022data} for linear PETC models. 
%
%
In \cite{salamati2021data, akella2022barrier} the authors synthesise barrier certificates for unknown systems using template-based candidates, providing PAC bounds for their correctness, for stochastic and  deterministic systems, respectively.
However, when PAC bounds are related to a system's transitions, the extension of behavioural guarantees from finite steps to infinite horizon is somewhat cumbersome, as we discuss in Section \ref{sec:traj-sample-abstract}.


The scenario theory can be used also for verification purposes, as in 
\cite{akella2022scenario}, where the authors 
translate the probabilistic specifications into a Value-at-Risk determination problem for a scalar random variable. 
In \cite{wang2020data, wang2021scenario}, a scenario-based set invariance verification approach is proposed, which relies on the observation of trajectories of an unknown deterministic system. 

The works \cite{devonport2020estimating, devonport2021forwardsets, devonport2021reachchristoffel} discuss two approaches for the data-driven construction of  forward reachability set of unknown autonomous systems, in the form of $n$-dimensional spheres and using Christoffel functions.


\section{Preliminaries}
\label{sec:scenario}

\subsection{Notation}
The set of real numbers is denoted by $\mathbb{R}$.  We use a string notation for sequences, e.g. $\mathtt{r}=abc$ means $\mathtt{r}(1)=a$, $\mathtt{r}(2)=b$, $\mathtt{r}(3)=c$. When the length of a string $\mathtt{r}$ is not clear from the context, we use the subscript  $\ell\in\mathbb{N}_+$ to denote its length, i.e. $\mathtt{r}_\ell$. Given two sequences $\mathtt{r}_m$ and $\mathtt{p}_n$ with $\infty\geq m> n$, we say that $\mathtt{r}$ exhibits $\mathtt{p}$ if there exists $k\geq0$ such that $\mathtt{r}(k+i)=\mathtt{p}(i)$ for $i=\{1,...,n\}$, 
denoted $\mathtt{r}\models\Diamond \mathtt{p}$. Given a set of sequences $S$ and a sequence $\mathtt{p}$ we say that $S$ exhibits $\mathtt{p}$ if there exists $\mathtt{s}\in S$ such that $\mathtt{s}\models\Diamond \mathtt{p}$, denoted $S\models\Diamond \mathtt{p}$.  We denote the uniform distribution supported on a domain $\cD\subset\mathbb{R}^n$ by $\mathcal{U}_\cD$.

\subsection{Scenario Theory Background}
\label{subsec:scenario-background}

Let us offer an overview on the scenario theory, as outlined in \cite{campi2018general, garatti2021risk}, that we will use throughout this work. 
Let  $(\Delta,\mathcal{F},\mathbb{P})$ be a probability space, where $\Delta$ is the sample space, endowed with a $\sigma$-algebra $\mathcal{F}$ and a probability measure $\mathbb{P}$; further, we denote by $\Delta^N$ the $N$-Cartesian product of the sample space and with $\mathbb{P}^N$ its product measure.
A point in $(\Delta^N, \mathcal{F}^N, \bP^N)$
is thus a sample $(\delta_1 , \ldots , \delta_N )$ of $N$ elements drawn independently from $\Delta$ according to the same probability $\bP$.  
Each $\delta_i$ is regarded as an observation, or \emph{scenario}. 
A set $\Theta$, called the decision space, contains the decisions, i.e. the optimization space -- no particular structure is assumed for this set.  To every $\delta \in \Delta$ there is associated a constraint set $\Theta_\delta\subseteq\Theta$ which identifies the decisions that are
admissible for the situation represented by $\delta$.

Typically, the scenario theory refers to an optimisation program, which computes $\theta_N^*$, i.e. the solution of a optimisation program based on $N$ samples. 
Once $\theta^*_N$ is computed, we are interested in assessing how it generalises to unseen scenarios $\delta \in \Delta$, or, rather, the probability of extracting a sample that violates the constraints defined by $\theta^*_N$. To this end, we define the violation probability:
\begin{definition}[Violation \cite{campi2018general}] 
The violation probability of a given $\theta \in \Theta$ is defined as 
\begin{equation}
\label{eq:viol-prob}
    V ( \theta ) = 
    \mathbb{P} [ \, \delta \in \Delta \ | \ \theta \notin \Theta_\delta \, ]. 
\end{equation}
$V(\theta)$ quantifies the probability with which a new randomly selected constraint $\Theta_\delta$ is violated by $\theta$. If $V(\theta) \leq \epsilon$,  we say that $\theta$ is $\epsilon$-robust against constraint violation. 
\hfill $\square$
\end{definition}
Notice that in general $V (\theta)$ is not directly computable since $\mathbb{P}$ is not known.
In \cite{garatti2021risk}, under mild assumptions, the authors show that a confidence bound can be  derived as follows: 
\begin{theorem}[\cite{garatti2021risk}]
\label{theo:scenario-gurantees}
Given a confidence parameter $\beta \in (0, 1)$ and the solution $\theta^*_N$, it holds that 
\begin{equation}
\label{eq:scenario-confidence}
    \bP^N [
    V(\theta^*_N) \leq \epsilon(s^*_N, \beta, N)
    ] \geq 1 - \beta,
\end{equation}
where $\epsilon(\cdot)$ is the solution of a polynomial equation (omitted here for brevity) and $s^*_N$ is the so-called complexity of the solution --  it represents the minimum number of constraints ($m \leq N$) that yield the same solution $\theta^*_N$ computed with $N$ scenarios. 
\hfill $\square$
\end{theorem}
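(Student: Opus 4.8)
Since this statement is quoted from the scenario-optimization literature, my plan is to reconstruct the distribution-free risk bound from first principles rather than merely invoke it. The plan is to build everything on the combinatorial notion of a \emph{support subsample}: a minimal subset of the $N$ scenarios that reproduces the decision $\theta^*_N$. First I would make precise the standing regularity assumptions that render $s^*_N$ well defined, namely existence and uniqueness of the solution map $(\delta_1,\dots,\delta_N)\mapsto\theta^*_N$ together with a non-degeneracy condition ensuring that the minimal support subsample has an almost surely unique cardinality $s^*_N\in\{0,1,\dots,N\}$. The crucial structural lemma to isolate is a \emph{consistency (locality) property}: $\theta^*_N$ depends on the sample only through its support subsample, so that deleting any non-support scenario leaves the decision unchanged and, conversely, every non-support scenario is automatically satisfied by $\theta^*_N$.

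The core of the argument is a counting-plus-conditioning computation exploiting that the $\delta_i$ are i.i.d., hence exchangeable. Fix a complexity level $k$ and, for each index set $I\subseteq\{1,\dots,N\}$ with $|I|=k$, consider the event $\{I \text{ is exactly the support subsample}\}$. By exchangeability all $\binom{N}{k}$ such events are equiprobable, so I would reduce to the canonical case $I=\{1,\dots,k\}$ and multiply by $\binom{N}{k}$. Conditioning on the $k$ scenarios in $I$ fixes the candidate decision $\theta^*_I$ and hence its true violation $v:=V(\theta^*_I)$; the remaining $N-k$ scenarios are independent of these and, by the consistency property, the event $\{s^*_N=k\}$ forces each of them to be satisfied, an event of probability $(1-v)^{N-k}$. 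This yields the master relation linking the joint law of $(s^*_N,V(\theta^*_N))$ to integrals of $(1-v)^{N-k}$ against the unknown law of $v$, and it is precisely here that the bound becomes \emph{distribution free}: the $\binom{N}{k}(1-\epsilon)^{N-k}$ terms are universal while the dependence on $\mathbb{P}$ cancels out of the inequality.

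From the master relation I would derive, for each $k$, a tail bound $\mathbb{P}^N\{s^*_N=k,\ V(\theta^*_N)>\epsilon\}\le(\text{Beta-type expression in } k,N,\epsilon)$, recognizing the right-hand side as a partial binomial (incomplete Beta) tail. Summing over $k=0,\dots,N$ and allocating the confidence budget uniformly across the $N+1$ possible complexities (the origin of the $\beta/(N+1)$ factor) gives $\mathbb{P}^N\{V(\theta^*_N)>\epsilon(s^*_N,\beta,N)\}\le\beta$, where $\epsilon(k,\beta,N)$ is defined implicitly as the root in $(0,1)$ of the polynomial obtained by equating each per-level tail to its share of $\beta$, e.g. $\binom{N}{k}(1-\epsilon)^{N-k}=\frac{\beta}{N+1}\sum_{m=k}^{N}\binom{m}{k}(1-\epsilon)^{m-k}$, with the convention $\epsilon(N,\cdot)=1$. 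Passing to complements gives the claimed $1-\beta$ confidence statement.

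I expect the main obstacle to be the rigorous handling of the combinatorial bookkeeping in the master relation: ensuring that the events $\{I \text{ is the support subsample}\}$ genuinely partition the sample space without double counting, controlling the measure-zero degenerate configurations so that the non-degeneracy assumption is used exactly where needed, and proving that the self-supporting (irreducibility) requirement on $I$ combined with the $(1-v)^{N-k}$ factor collapses to precisely the Beta-type tail rather than a looser bound. A secondary difficulty is that, unlike the convex case, no Helly-type result bounds $s^*_N$ a priori, so the argument must carry the full range $k\le N$ and cannot shortcut through duality; preserving the universality of the bound over all non-convex problem geometries and all distributions $\mathbb{P}$ is what makes the accounting delicate throughout.
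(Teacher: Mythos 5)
A preliminary remark: the paper does not prove this statement at all — Theorem~\ref{theo:scenario-gurantees} is imported verbatim from \cite{garatti2021risk} — so your reconstruction has to be measured against the proof in that reference. The first half of your sketch is faithful to it: support subsamples, the consistency/locality property, exchangeability, and the conditioning step that produces the master relation are exactly the scaffolding used there. The gap is in the last step. From the master relation at sample size $N$, the per-level bound you can extract is $\mathbb{P}^N\{s^*_N=k,\ V(\theta^*_N)>\epsilon\}\le\binom{N}{k}(1-\epsilon)^{N-k}$, and ``equating each per-level tail to its share of $\beta$'' then yields the equation $\binom{N}{k}(1-\epsilon)^{N-k}=\beta/(N+1)$ — \emph{not} the equation you wrote. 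The equation you state, $\binom{N}{k}(1-\epsilon)^{N-k}=\frac{\beta}{N+1}\sum_{m=k}^{N}\binom{m}{k}(1-\epsilon)^{m-k}$, is indeed the one in \cite{garatti2021risk}, but it cannot be produced by a union bound at the single sample size $N$: the terms $\binom{m}{k}(1-\epsilon)^{m-k}$ with $k\le m<N$ arise from imposing the total-probability identity $\sum_{k\le m}\binom{m}{k}\int_0^1(1-v)^{m-k}\,\mathrm{d}F_k(v)=1$ at \emph{every} intermediate sample size $m\in\{k,\dots,N\}$ (where $F_k$ is the unknown conditional law of the violation of complexity-$k$ solutions), and then solving a worst-case moment problem, via linear-programming duality, over all families $\{F_k\}$ compatible with that hierarchy of constraints. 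This multi-sample-size linkage is the actual technical core of the reference and is entirely absent from your sketch. What your mechanism does prove is a valid but strictly more conservative statement ``of the stated form'' (the paper leaves the polynomial unspecified), but it is not the cited theorem and does not reproduce the $\epsilon$ values used in the paper's experiments.

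The second gap concerns your standing non-degeneracy assumption. You treat degenerate configurations as measure-zero events to be controlled, which is the right picture for continuous $\Delta$. In this paper, however, $\Delta$ is finite (one-hot encodings of $\ell$-sequences), so repeated scenarios occur with positive probability — with certainty once $N$ exceeds the number of realisable sequences — and then the solution is no longer determined by its support constraints in the removal sense: neither copy of a duplicated scenario is a support constraint, minimal support subsamples are non-unique as index sets, and the events ``$I$ is exactly the support subsample'' neither partition the sample space nor factor the way your conditioning step requires. The paper's Remark~1 flags precisely this: its scenario problem is \emph{degenerate}, and it cites \cite{garatti2021risk} exactly because that theory extends the risk bound to degenerate problems by an additional argument. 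As scoped, your proof covers only the non-degenerate case and therefore does not apply to the setting in which the paper actually invokes the theorem.
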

\begin{remark}
The canonical scenario theory assumes continuous spaces $\Delta$ and $\Theta$. 
We instead refer to non-convex scenario theory results, as the sample space $\Delta$ is discrete: this, in the scenario literature, is called a \emph{degenerate} problem. For a detailed discussion on the derivation of these results, the interested reader may refer to \cite{garatti2021risk, campi2018general}. 
\end{remark}

In this work, we solve the following problem.

\smallskip

\noindent\fbox{%
    \parbox{0.975\linewidth}{%
        \textbf{Problem Statement.} Given an unknown dynamical system, build an abstraction such that, with high confidence, the probability of witnessing a behaviour of the concrete system that is not exhibited by the abstraction's behaviours is below a threshold value.
    }%
}
%


\section{Sampling and Abstractions}
\label{sec:traj-sample-abstract}

\subsection{Modeling Framework}
\label{subsec:models}

Consider a time-invariant dynamical system described by
\begin{equation}
\label{eq:deterministic-sys}
    \Sigma(x):=
    \begin{cases}
        x_{k+1} = f( x_k ), 
        \\ 
        y_k = h(x_k),
        \\
        x_0 = x,
        \end{cases}
\end{equation}
where $x_k \in \mathcal{D} \subseteq \mathbb{R}^{n_x}$ is the plant’s state at time $k \in \mathbb{N}_+$, with initial value $x_0 \in \cD$,  $y_k \in \mathcal{Y}$ is the system output with $|\mathcal{Y}|<\infty$, and $n_x$ is the state-space dimension.
If the trajectory $x_k$ exits $\cD$ at time $k$, the output map returns a special symbol $y^{\dagger}$ for all $t \geq k$.
We may think of the map $h(\cdot)$ as a \emph{partitioning} map, that returns the partition label (or index) corresponding to any state $x_k$. The flow $f(\cdot)$ and output map $h(\cdot)$ are unknown, but we assume that given an initial condition $x_0$ we can observe the output sequence $y_0,y_1,...$ generated by $\Sigma(x_0)$. We denote $\cB^\omega(\Sigma)$ the set of infinite horizon behaviours (i.e. output trajectories) exhibited by the system $\Sigma$. With $\cB_H(\Sigma)$, we denote the set of behaviours for the time interval $k=[0, H-1]$. 
When the system of interest is evident, we remove the notation $\Sigma$. Let us now introduce the notion of 
\emph{equivalence class}~\cite{tabuada2009verification}:
\begin{equation*}
    [ y ] = \{ x \in \cD\ 
    | 
    \ y = h(x) \},
\end{equation*}
and similarly, we define the equivalence class for an output sequence $\ty_{\ell_i}=y_{i_1}y_{i_2}...y_{i_\ell}\in\mathcal{Y}^\ell$ as
\begin{equation}
\label{eq:equivalence-class}
    [\mathtt{y}_{\ell_i}] = \{ x \in \cD\ 
    | 
    \ y_{i_j} = h(f^{j-1}(x)) \text{ for } j=1,...,\ell\},
\end{equation}
with $f^0(x)=x$.
Equation \eqref{eq:equivalence-class} states that 
for $i=1,...,|\mathcal{Y}|^\ell$, i.e. for every $\ell$-sequence $\ty_{\ell_i}\in\mathcal{Y}^\ell$ the output equivalence class $[\ty_{\ell_i}]$ is the set of points $x$ such that if the dynamical system is initialized at $x$, then the output sequence over the time interval $[0,\ell-1]$ corresponds to $\ty_{\ell_i}$. Note that an equivalence class might be empty if no initial condition gives an output behaviour corresponding to a particular $\ell$-sequence.
Further, for all $\ell \geq 1$, the set of all $[\ty_{\ell_i}]$ forms a partition of the domain $\cD$.

To compute the abstraction of an unknown dynamical system we assume that $\Sigma$ is randomly initialised by drawing initial conditions according to a probability measure space $(\cD,\mathcal{F}(\cD),\mathcal{P})$. 
Formally, we consider $\Sigma_s=(\Sigma(x_0),x_0\sim\mathcal{P})$ defined as
\begin{equation}
\label{eq:sys-probabilistic-init}
    \Sigma_s:=
    \begin{cases}
        \Sigma(x_0), 
        \\
        x_0 \sim \mathcal{P}(\cD).
        \end{cases}
\end{equation}
The addition of a probability measure over the initial states affects the behaviours of $\Sigma_s$. 
Let us refer to the probability of drawing an initial condition $x_0$ within the equivalence class $[\ty_{\ell_i}]$ as
\begin{equation*}
    \mathbb{P}[x_0 \in [\ty_{\ell_i}]] = \int_{[\ty_{\ell_i}]} p(v) dv, 
\end{equation*}
where $p(\cdot)$ represents the probability density function of $\mathcal{P}$.  
Throughout this work, abusing slightly notation, we consider $\cB_H(\Sigma_s)$
\begin{equation*}
    \cB_H(\Sigma_s) := \{ \ty_H \in \cB_H(\Sigma) \ | \  
    \mathbb{P} [ \,  x_0 \in [\ty_H] \, ] > 0
    \},
\end{equation*}
that is the set of behaviours associated to equivalence classes with strictly positive probability measure.
Obviously, it holds that if a behaviour belongs to $\cB_H(\Sigma_s)$ then it also belongs to $\cB_H(\Sigma)$, but the contrary is not necessarily true. 
Some behaviours present on $\cB_H(\Sigma)$  may have a zero probability measure associated in $\Sigma_s$, therefore we say that $\Sigma_s$ almost surely does not exhibit them -- and we almost surely we do not sample them. For instance, an unstable equilibrium with an output associated exclusively to it $y_{eq}=h(x_{eq})$. In that case, while $\cB_H(\Sigma)$ contains the behaviour $(y_{eq})^H$, this behaviour is only exhibited if the singleton $\{x_{eq}\}$ is sampled, which almost surely does not happen. 
Note that, for the purpose of leveraging the scenario theory in this context, the presence of $\mathcal{P}$ is unavoidable. Abusing notation, we denote $\cB^\omega(\Sigma_s)$ and $\cB_H(\Sigma_s)$ the behaviours, with \emph{strictly positive} probability measure,
of infinite and finite time horizon $H$, respectively.
Notice that the PAC guarantees are derived for the behaviours of $\Sigma_s$, rather than for $\Sigma$.


\subsection{Abstractions via Transition Systems}
\label{subsec:abstractions}

The abstraction of a system is a tool that facilitates the analysis of large (even infinite) models. In this work, we assume that the concrete system is deterministic and non-blocking,
and
we employ a finite-state abstraction in the form of a transition system (TS). 
\begin{definition}[Transition System \cite{tabuada2009verification}]
    A transition system $\cS$ is a tuple $(\cX,\cX_0,\cE,\cY,\cH)$ where:
\begin{itemize}
    \item $\cX$ is the (possibly infinite) set of states,
    \item $\cX_0 \subseteq \cX$ is the set of initial states,
    \item $\cE \subseteq \cX \times \cX$ is the set of edges, or transitions, 
    \item $\cY$ is the set of outputs, and
    \item  $\cH : \cX \rightarrow \cY$ is the output map. 
\end{itemize}
\end{definition}
\smallskip
%
We consider \textit{non-blocking} transition systems, i.e. systems where every state is equipped with at least one outgoing transition.

Let us define $\mathtt{r} = x_0x_1x_2 \ldots$ an infinite internal behaviour, or \textit{run} of $\cS$ if $x_0 \in \cX_0$ and $(x_i, x_{i+1}) \in \cE$ for all $i \in \mathbb{N}$, and, with slight abuse of notation, $\mathcal{B}(\mathtt{r}) = y_0y_1y_2 \ldots$ its corresponding \textit{external} behaviour, or trace, if $\cH(x_i) = y_i$ for all $i \in \mathbb{N}$. With $\mathcal{B}^\omega(\cS(x))$ and $\mathcal{B}_H(\cS(x))$ we denote the sets of all infinite and of all finite external behaviours of length $H$  of $\cS$ starting from state $x$, respectively; when the system of interest is clear from the context we use the shorthand notation $\mathcal{B}^\omega(x)$ and $\mathcal{B}_H(x)$ respectively.
Models \eqref{eq:deterministic-sys} and \eqref{eq:sys-probabilistic-init} can be described as a transition system, as follows.
\begin{definition}[Embedding of a dynamical system]\label{def:embedding}
The embedding of a system $\Sigma$ is the transition system $\cS_\Sigma$ defined by $\cS_\Sigma=(\cD,\cD_0,\cE,\cY,\cH)$ where:
\begin{itemize}
    \item $\cE\subseteq \cD\times\cD$ is the set of all pairs $x_i,x_j$ s.t. $\cE = \{(x_i,x_j)\in\cD\times\cD:x_j=f(x_i)\}$,
    \item $\cH$ : $\cD \rightarrow \cY$ is the map defined by $h(x)$
\end{itemize}
Analogously if a system $\Sigma_s$ is endowed with a probability distribution over the initial states as in \eqref{eq:sys-probabilistic-init}, then the initial state is a random variable, and for every $x_0\in\cD_0$, it holds that $x_0\sim\mathcal{P}(\cD_0)$. 
\hfill $\square$
\end{definition}
Obviously, the embedding of a system does not change the set of all infinite (or finite) behaviours. Hence, the sets $\cB^\omega(\cS_{\Sigma_s})$ and $\cB^\omega(\cS_\Sigma)$ differ only by zero probability measure behaviours. When no ambiguity arises, we simply denote the embedding of $\Sigma_s$ as $\cS$, omitting the subscript.


In order to construct the embedding of \eqref{eq:sys-probabilistic-init}, we need full knowledge of the flow $f(\cdot)$ to compute the transitions $\cE$. 
As we only have access to the output $y_k$, 
we recall the notion of \emph{behavioural inclusion}: the abstraction is oblivious to the internal behaviour of a system, but we require that all behaviours that the concrete system can exhibit are within the set of behaviours of the abstraction. 
\begin{definition}[behavioural inclusion \cite{tabuada2009verification}]
\label{def:behave-inclus}
Consider two systems $\cS_a$ and $\cS_b$ with $\cY_a = \cY_b$. We say that $\cS_b$ behaviourally includes $\cS_a$, denoted by $\cS_a \preceq_\mathcal{B} \cS_b$, if $\mathcal{B}^\omega(\cS_a) \subseteq \mathcal{B}^\omega(\cS_b)$.
We say that 
$\cS_a$ is behaviourally included in $\cS_b$ until horizon $H$
if this holds until horizon $H$, i.e. 
$\cB_H(\cS_a) \subseteq \cB_H(\cS_b)$, 
denoted 
$\cS_a \preceq_{\cB_H} \cS_b$.
\hfill $\square$
\end{definition}
As we deal with systems where probability is embedded in the initial states, we define the probabilistic version of the behavioural inclusion relation. 
\begin{definition}[Probabilistic Behavioural Inclusion]
\label{def:prob-behave-incl}
Consider two systems $\cS_a$ and $\cS_b$ with $\cY_a = \cY_b$. We say that $\cS_a$ is behaviourally included in $\cS_b$ with probability greater or equal than $1-\epsilon$, denoted by 
$$\mathbb{P}[\cS_a \preceq_\mathcal{B} \cS_b]\geq 1-\epsilon, $$ 
if for $x_0\sim\mathcal{P}$ it holds that:
\begin{equation}
\label{eq:prob-behave-inclusion}
     \mathbb{P}\left[\cB^\omega(\cS_a(x_0))\subseteq \cB^\omega(\cS_b) \ | \ x_0 \sim \mathcal{P} \right]\geq 1-\epsilon,
\end{equation}
where $\cS_a(x_0)$ denotes the internal behaviour of system $\cS_a$ starting from $x_0$.
We say that $\cS_a$ is almost surely behaviourally included in $\cS_b$ if $\epsilon = 0$, written $\cS_a \preceq_\mathcal{B} \cS_b$ a.s. If the probabilistic behavioural inclusion holds until horizon $H$, analogously, we write
$$\mathbb{P}[\cS_a \preceq_{\mathcal{B}_H} \cS_b]\geq 1-\epsilon.$$
\end{definition}
A natural way of building a behavioural inclusion abstraction is by mapping each possible output sequence ${y}_i$ to an abstract state. We may elaborate this intuition through a so-called $\ell$-complete model:
\begin{definition}[(Strongest) $\ell$-complete abstraction \cite{schmuck2015comparing, de2021computing}]
\label{def:sl-ca}
Let $\cS :=(\cX,\cX_0,\cE,\cY,\cH)$ 
be a transition system, and 
let $\cX_\ell \subseteq \cY^\ell$ be the set of all $\ell$-long subsequences of all behaviours in $\cS$. 
Then, the system 
$\cS_\ell = (\cX_\ell, \cB_\ell(\cS), \cE_\ell, \cY^\ell, \cH)$
is called the (strongest) $\ell$-complete abstraction (S$\ell$-CA) of $\cS$, 
where
\begin{itemize}
    \item $\cE_\ell$ = $\{(k \sigma, \sigma k') \, | \, k,k' \in \cY$, $\sigma \in \cY^{\ell-1}$,  $k \sigma$, $\sigma k'$ $\in \cX_\ell$ \},
    \item $\cH(k \sigma) = k$,
\end{itemize}
where we denote $\cB_\ell(\cS)$ all the possible external traces of system $\cS_\ell$ and $\cY^\ell$ is the cartesian product $\cY \times \ldots \times \cY$ repeated $\ell$ times. 
\end{definition}

The intuition behind the S$\ell$-CA is to encode each state as an $\ell$-long external trace, as depicted in Fig.~\ref{fig:example_slca}.
Let us consider $\ell=3$ and  
assume a three-step trajectory $x_0 x_1 x_2$ that returns an output $y_0 y_1 y_2$. 
This trajectory is mapped into the abstract state $\mathtt{y}_\ell$ corresponding to the 3-sequence $\mathtt{y}_\ell = y_0 y_1 y_2 $.
Notice that the transitions of an $\ell$-complete model follow the so-called ``domino rule'': e.g., let us assume $\ell=3$ and let us observe a trace $y_0 y_1 y_2 $; 
the next $\ell$-trace must begin with $y_1 y_2 $. 
Therefore, the state $y_0 y_1 y_2 $  can transition to 
e.g. $y_1 y_2 y_0 $, $y_1 y_2 y_1 $, $y_1 y_2 y_2 $.
Finally, the output of a state is its first element: state $y_0 y_1 y_2 $ has output $\cH(y_0 y_1 y_2 ) = y_0$.

The $\ell$-complete model does not require the knowledge of the concrete flow $f(\cdot)$ to construct its transitions, as it only relies on the external behaviours (i.e. the output) of a model to construct its states. 
The value $\ell$ defines the trajectory horizon and practically embeds the future of each concrete state $x_k$.
The S$\ell$-CA generates all possible behaviours of an underlying system, as long as its state space (i.e. the $\ell$-sequences) contains all the $\ell$-trajectories that the system may exhibit. 
We only need the possible $\ell$-behaviours of a system to construct an $\ell$-complete model.

\begin{figure}
    \centering
    \tikzset{
        ->,  
        >=stealth', 
        node distance=2cm, 
        every state/.style={thick, minimum width=1.2cm,fill=gray!0}, 
        initial text=$ $, 
        }

\begin{tikzpicture}[scale=0.8, transform shape]
    
    \small
    \node[state] (q0) {$(y_1 y_1 )$};
    \node[state, right of=q0] (q1) {$(y_1 y_2 )$};
    \node[state, below of=q1, xshift=-1cm] (q2) {$(y_2 y_2 )$};
    \draw   (q0) edge[loop above] node{} (q0)
            (q2) edge[loop left] node{} (q2)
            (q1) edge[above] node{} (q2)
            (q0) edge[above] node{} (q1);
    
    \footnotesize
    \node[state, right of=q1, node distance=2.5cm] (q0) {$(y_1 y_1 y_2 )$};
    \node[state, right of=q0, node distance=2.5cm] (q1) {$(y_1 y_2 y_2 )$};
    \node[state, below of=q1] (q2) {$(y_2 y_2 y_2 )$};
    \node[state, below of=q0] (q3) {$(y_1 y_1 y_1 )$};
    \draw   (q2) edge[loop right] node{} (q2)
            (q3) edge[loop left] node{} (q3)
            (q1) edge[above] node{} (q2)
            (q3) edge[above] node{} (q0)
            (q0) edge[above] node{} (q1);
\end{tikzpicture}
    \caption{Example of S$\ell$-CA, with $\ell=2$ (left) and $\ell=3$ (right).}
    \label{fig:example_slca}
\end{figure}
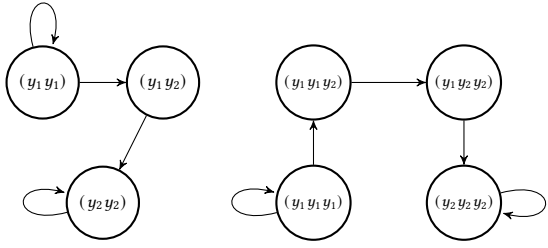


\subsection{Trajectory Sampling}
\label{subsec:traj-sampling}

Several approaches \cite{badings2021sampling, cubuktepe2020scenario, devonport2021symbolic,  lavaei2022constructing, makdesi2022data,lavaei2022data}
provide PAC guarantees based on the concrete system's transitions: in particular, samples are used to compute  
either transition probabilities or the probability of missing a transition amongst abstract states. 
As probabilities are related to \emph{transitions}, the analysis of infinite-horizon properties becomes cumbersome: at every step, the transition probabilities accumulate until the abstraction becomes uninformative.

Further, 
we need to carefully address the extension of guarantees stemming from one-step transitions into multiple steps.
The samples belonging to a trajectory are inherently correlated and they belong to different probability distributions. 
In practice, we sample the initial state $x_0$ according to a probability distribution $\mathcal{P}$, i.e. 
\begin{equation*}
    x_0 \sim \mathcal{P},  
\end{equation*}
and we then compute $x_1$, which may arise from a deterministic or stochastic system. 
Recall that  
the scenario theory can be used to 
provide guarantees about the behaviours arising from samples \emph{extracted from the same distribution}.
In  general, however,
the value 
\begin{equation*}
    x_1 = f(x_0) + \eta_0 \nsim \mathcal{P}, 
\end{equation*}
belongs to a \emph{different} probability distribution,  as $f(x_0)$ belongs to a $\mathcal{P}$ distribution transformed by $f(\cdot)$ and the noise $\eta_0$ -- if present -- may belong to any other distribution; overall, $x_0$ and $x_1$, belong to \emph{different} distributions. 
Hence, although one may use one-step transitions to solve a scenario program, i.e. 
$$
\{\delta_i\}_{i=1}^N = \{(x_0 \rightarrow x_1)_i\}_{i=1}^N,
$$ 
the scenario theory provides guarantees for a new one-step transition $(x_0 \rightarrow x_1)_{N+1}$. Nothing can be said on $(x_1 \rightarrow x_2)_{N+1}$, without further assumptions, as $x_1\nsim\mathcal{P}$.

The following example shows the shortcomings of using one-step transitions to infer (in)finite-horizon properties. 
\begin{example} 
Let us assume to partition the state space via a gridding procedure, as depicted in Fig. \ref{fig:scenario-scheme}, and let us sample initial conditions from $P_0$, the partition in the bottom left corner. 
All trajectories starting from $P_0$ reach the blue region of  $P_{4}$, where the blue region is strictly smaller than $P_4$.
Let us further assume that every trajectory starting from the blue region reaches an unsafe region ($P_6$) in one step, whereas the trajectories starting from the white portion reach a safe set ($P_7$) in one step. 
If we sample trajectories from the partition $P_{4}$, we see that \emph{some} of these reach the safe set (the ones starting from the white portion) and others reach the unsafe set (starting from the blue portion). 
Whilst the probability of reaching the unsafe set from $P_0$ is 1, using the sampled one-step transitions yield a probability  of reaching the unsafe set strictly smaller than 1 and actually suggests the safe set can be reached. 
\hfill $\square$
\end{example}

\begin{figure}
    \centering
    \tikzset{
        >=stealth', 
        node distance=2.5cm, 
        every state/.style={thick, minimum width=1.2cm,fill=gray!0}, 
        initial text=$ $, 
        }

\begin{tikzpicture}[scale=0.90, transform shape]
    
    
    \draw[step=1.0,gray,thin] (0,0) grid (3,3);
    
    \node[] (p0) at (0.5, 0.5) {\footnotesize $P_0$};
    \node[] (p1) at (1.5, 0.5) {\footnotesize $P_1$};
    \node[] (p2) at (2.5, 0.5) {\footnotesize $P_2$};
    
    \node[] (p3) at (0.5, 1.5) {\footnotesize $P_3$};
    \node[] (p4) at (1.75, 1.25) {\footnotesize $P_4$};
    \node[] (p5) at (2.5, 1.5) {\footnotesize $P_5$};
    
    \node[] (p6) at (0.5, 2.5) {\footnotesize \red{$\mathbf{P_6}$}};
    \node[] (p7) at (1.5, 2.5) {\footnotesize \dgreen{$\mathbf{P_7}$}};
    \node[] (p8) at (2.5, 2.5) {\footnotesize $P_8$};
    
    \node[] (c1) at (0.75, 0.25) {\Large \textbf{$\cdot$}};
    \node[] (c3) at (0.65, 0.15) {\Large \textbf{$\cdot$}};
    \node[] (c2) at (0.85, 0.15) {{\textbf{\Large $\cdot$}}};

    \node[] (dest1) at (1.25, 1.3) {} ;
    
    \draw[->]  (c1) edge[left, bend right] node{} (dest1) ;
    
    \node[blue] (d1) at (1.2, 1.3) {\Large \textbf{$\cdot$}};
    \node[blue] (d3) at (1.25, 1.5) {\Large \textbf{$\cdot$}};
    \node[blue] (d2) at (1.3, 1.35)  {\Large \textbf{$\cdot$}};
    
    \draw[blue] (1.25, 1.5) ellipse (0.2cm and 0.3cm);
    
    \node[] (unsafe) at (0.5, 2.5) {};
    
    \draw[->, blue]  (d2) edge[left, bend left] node{} (unsafe) ;
    
    \node[] (extra1) at (1.8, 1.5) {\Large \textbf{$\cdot$}};
    \node[] (safe) at (1.5, 2.5) {};
    \draw[->]  (extra1) edge[left, bend right] node{} (safe) ;

    
    \node[draw, circle] (a11) at (5 , 0.5) {};
    \node[draw, circle] (a21) at (5 , 1.5) {};
    \node[draw, circle, thick, red] (a31) at (5 , 2.5) {};
    
    \node[draw, circle] (a12) at (6 , 0.5) {};
    \node[draw, circle] (a22) at (6 , 1.5) {};
    \node[draw, circle, thick, black!40!green] (a32) at (6 , 2.5) {};
    
    \node[draw, circle] (a13) at (7 , 0.5) {};
    \node[draw, circle] (a23) at (7 , 1.5) {};
    \node[draw, circle] (a33) at (7 , 2.5) {};
    
    \draw[->] (a11) edge[left, bend right] node{} (a22); 
    \draw[->] (a22) edge[left, bend right] node{} (a32); 
    \draw[->, blue] (a22) edge[left, bend left] node{} (a31);

\end{tikzpicture}
    \caption{Abstraction based on one-step transitions.}
    \label{fig:scenario-scheme}
\end{figure}
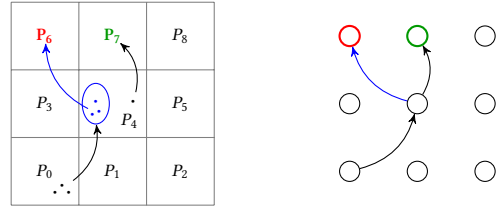

Our approach is crucially different from the related literature: 
we employ the scenario approach to evaluate the probability of (existence of) \emph{states}, whilst we use a deterministic rule (i.e. the domino rule) for transitions. 
%


%
\subsection{Data-driven Abstraction: Finite-time Guarantees}
\label{subsec:behav-finite-time}

To overcome the limitations stemming from one-step transitions, we collect trajectories up to time $H$ and reason about the system behaviour over such horizon.
Each $H$-long trajectory can be divided into several $\ell$-sequences (or one, if $\ell = H$). For instance, consider $\ell=2$, $H=4$, and assume we sample the trajectory $y_1y_2 y_3 y_4$;  the trajectory is split into $(H-\ell+1)$ 2-sequences: $y_1 y_2$, $y_2 y_3$, $y_3 y_4$, and each $\ell$-sequence corresponds to an  abstract state of the \slca.

The data-driven procedure works as follows. 
We collect $N$ $H$-long trajectories and construct $\cX_\ell^N$, the set of witnessed $\ell$-sequences $\ty_\ell$, that acts as the state set of the data-driven S$\ell$-CA. 
\begin{definition}[Data-driven S$\ell$-CA]
\label{def:data-driven-slca}
The $\ell$-complete abstraction  
$\cS_\ell^N = (\cX_\ell^N, \cX_\ell^N, \cE_\ell, \cY^\ell, \cH)$ 
is called the data-driven $\ell$-complete abstraction (S$\ell$-CA) of $\cS$, 
where
\begin{itemize}
    \item $\cX^N_\ell$ is the state space built from the $\ell$-sequences collected from  $N$ trajectories of an underlying concrete system.
\end{itemize}
The transitions, output space, and output map follow Definition \ref{def:sl-ca}.
\hfill $\square$
\end{definition}

\begin{remark}[Domino Completion]
The structure of the domino transitions, coupled with the sample-based states, may give rise to a \emph{blocking} transition system. 
To illustrate this issue,
consider $\ell=3$ and a dataset that exhibits solely the sequences $\{ y_1 y_1 y_1,\, y_1 y_1 y_2,\, y_1 y_2 y_1 \}$. As the dataset does not include any sequence starting with $y_2 y_1$, the state corresponding to $y_1y_2y_1$ has no outgoing transitions, as depicted in Fig. \ref{fig:blocking-TS}.
On the other hand, the existence of the sequence $y_1y_2y_1$ \emph{implies} the existence of at least one sequence starting with $y_2y_1$. 
To overcome this impediment, we artificially add \emph{all} states corresponding to sequences $y_2 y_1 *$.
We repeat the procedure until we obtain a non-blocking transition system.
\hfill $\square$
\end{remark}

\begin{figure}
    \centering
    \tikzset{
        ->,  
        >=stealth', 
        node distance=2.5cm, 
        every state/.style={thick, minimum width=1.2cm,fill=gray!0}, 
        initial text=$ $, 
        }

\begin{tikzpicture}[scale=0.88, transform shape]
    
    \small
    \node[state, right of=q0, node distance=4.0cm] (q0) {$(y_1y_1 y_2 )$};
    \node[state, right of=q0, node distance=4cm] (q1) {$(y_1y_2y_1 )$};
    \node[draw, dashed, rounded corners, minimum height = 1cm, 
    below of=q1, xshift=1cm, node distance=2.2cm] (q2) {$(y_2y_1y_2 )$};
    \node[draw, dashed, rounded corners, minimum height = 1cm,
    below of=q1, xshift=-1cm, node distance=2.2cm] (q21) {$(y_2y_1y_1 )$};
    
    \node[state, below of=q0, node distance=2.2cm] (q3) {$(y_1y_1y_1 )$};
    
    \draw   (q3) edge[loop left] node{} (q3)
            (q3) edge[left] node{} (q0)
            (q0) edge[above] node{} (q1);
    \draw (q1) edge[left, dashed, bend right] node{} (q2)
          (q1) edge[left, dashed] node{} (q21);
    \draw (q2) edge[bend right, dashed, right] node{} (q1);
    \draw (q21) edge[dashed, above] node{} (q3);
    
\end{tikzpicture}
    \caption{Construction of a non-blocking automaton. Dashed lines indicate artificial states and transitions, added by the domino completion.}
    \label{fig:blocking-TS}
\end{figure}
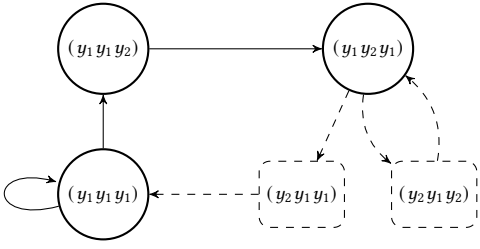

\smallskip

From here on we assume that the data-driven S$\ell$-CA is non-blocking. Let us recall that we sample trajectories from $\Sigma_s$ in \eqref{eq:sys-probabilistic-init}: the set $\cX_\ell^N$ almost surely does not contain $\ell$-sequences related to zero-measure sets. 
Once we collect $N$ trajectories from the concrete system and construct the corresponding data-driven \slca, we leverage the scenario theory to provide bounds on the probabilistic behavioural inclusion between the data-driven abstraction $\cS_\ell^N$ and the embedding of the concrete system $\cS$. In practice, we upper bound the probability of drawing an initial condition $x_0$ in the concrete system that produces an $H$-long behaviour which cannot be generated by the abstraction.
%

As a first step towards the definition of the scenario problem, let us outline a chance-constrained program, where we assume to know both the probability distribution $\mathcal{P}$ and the concrete system $\Sigma$. We denote by $\ty_{\ell_ i}$ the $i$-th $\ell$-sequence of the set $\cY^\ell$, for $i\in\{1,...,|\cY|^\ell\}$. 
For the sake of simplicity, let us formulate the program for $\ell=H$ (if $H>\ell$ the same reasoning applies), and $H \geq 1$. Let us define the random vector $\delta\in\Delta$ as follows:
\begin{multline}
\label{eq:delta-one-hot-vector}
    \delta = [X_1,X_2,...,X_{|\cY|^\ell}]^T, 
    \text{ where }
    \\
    X_i  \sim\text{Bernoulli}(p_i),
    \qquad
    p_i  = \mathbb{P}[ \,
    x_0: \cB_H(x_0) \models \Diamond \ty_{\ell_ i}
    \, ],
\end{multline}
for $i\in\{1,...,|\cY|^\ell\}$ and $x_0\sim\mathcal{P}$. 
Note that, since the flow $f(\cdot)$ is deterministic, the $\delta$ depends entirely on $x_0$, hence the $X_i$'s are mutually dependent. 
The vector $\delta$ has one unique entry equal to 1 when $\ell= H$, hence $\delta$ follows a categorical distribution; otherwise it may have multiple ones.

\begin{example}
\label{ex:chance-constrained}
Let us assume a model has output space $\cY = \{a, b, c, d, e\}$, and let us consider $\ell=H=2$, thus giving $|\cY|^\ell = 25$. 
Let us assume that the underlying system produces solely the 2-sequences $aa$, $bc$, $cd$, $de$, $eb$, whilst it does not admit any other $2$-sequence. 
%
The concrete system partitions the domain into 5 regions, as depicted in Fig.\ref{fig:chance-constr}, which correspond to the equivalence classes $[aa]$, $[bc]$, $[cd]$, $[de]$, $[eb]$. 
The vector $\delta$ is composed of 25 elements; and 
$p_1$, i.e. the probability of sampling $aa$, is equal to $0.04$. The probabilities of $bc$, $\ldots$, $eb$ are respectively $p_2, \ldots, p_5$, which are equal to $0.24$. The remaining $p_i$ for $i>5$ are zero.
At each initial condition $x_0$, randomly sampled within the domain, corresponds one $\ell$-sequence and one vector $\delta$. For instance, 
sampling the 2-sequence $aa$ has probability 0.04 and is encoded as the vector $\delta = [1, 0, \ldots, 0 ]$. 
Notice that \emph{knowing} that there are only 5 possible sequences reduces the support of $\delta$; however, when this knowledge is unavailable, we use $|\cY|^\ell$. 
\hfill $\square$
\end{example}

\begin{figure}
    \centering
    \tikzset{
        >=stealth', 
        node distance=2.5cm, 
        every state/.style={thick, minimum width=1.2cm,fill=gray!0}, 
        initial text=$ $, 
        }

\pgfplotsset{
  compat=newest,
  xlabel near ticks,
  ylabel near ticks
}

\begin{tikzpicture}[scale=0.5, transform shape, font=\Large]
    
    
    \draw[gray,thin] (0,0)--(0,5)--(5,5)--(5,0)--cycle;
    \draw[gray,thin] (2,2)--(3,2)--(3,3)--(2,3)--cycle;
    \draw[gray, thin] (0,3)--(2,3)--(2,0);
    \draw[gray, thin] (5,2)--(3,2)--(3,5);
    
    \node[] (aa) at (2.5, 2.5) {$aa$};
    \node[] (aa) at (1, 1.5) {$bc$};
    \node[] (aa) at (1.5, 4) {$cd$};
    \node[] (aa) at (3.5, 1) {$eb$};
    \node[] (aa) at (4, 3.5) {$de$};

\hskip 5pt
%

    \begin{axis}[
      ybar,
      at={(500, -2)},
      bar width=20pt,
      xlabel={$\ell$-sequence},
      ylabel={Probability},
      ymin=0,
      ytick=\empty,
      xtick=data,
      axis x line=bottom,
      axis y line=left,
      enlarge x limits=0.2,
      symbolic x coords={$aa$,$bc$,$cd$,$de$,$eb$},
      xticklabel style={anchor=base,yshift=-\baselineskip},
      nodes near coords={\pgfmathprintnumber\pgfplotspointmeta\%},
      yscale=0.9
    ]
      \addplot[fill=white] coordinates {
        ($aa$, 4)
        ($bc$, 24)
        ($cd$, 24)
        ($de$, 24)
        ($eb$, 24)
      };
    \end{axis}

\end{tikzpicture}
    \caption{Domain partitions and probability of sampling the $2$-sequences.}
    \label{fig:chance-constr}
\end{figure}
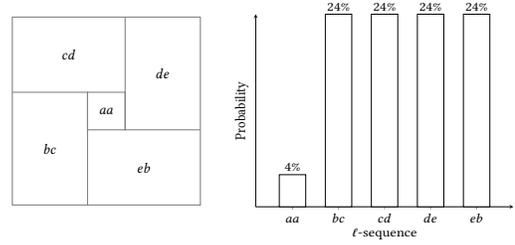

\begin{remark}
For the purpose of obtaining an exact S$\ell$-CA of $\cS$ such that $\cS\preceq_{\mathcal{B}} \cS_{\ell}$, 
we solely require the knowledge about the possible (i.e. realisable) $\ell$-sequences; 
in practice it would be sufficient to know for which $i\in\{1, \ldots ,|\cY|^\ell\}$ it holds that $p_i>0$. In that case, the relation $\cS \preceq_{\mathcal{B}} \cS_{\ell}$ holds almost surely (see Definition \ref{def:prob-behave-incl}).
\hfill $\square$
\end{remark}

As $\Delta$ is a discrete set with cardinality not greater than $|\cY|^\ell$, we can rewrite the $i$-th $\ell$-sequence $\ty_{\ell_i}$ as a one-hot vector $\delta_i$ of length $|\cY|^\ell$, as outlined in \eqref{eq:delta-one-hot-vector}, and let $\Theta~=~\mathbb{R}^{|\cY|^\ell}$.
%

We are now ready to state the chance-constrained problem:  
\begin{equation}
\label{eq:chance-constrained-prog}
\begin{aligned}
    &
    \min_{\theta \in \Theta, \Delta_\epsilon}
    &
        \mathbf{1}_{|\cY|^\ell}^{\text{T}} \cdot \theta &, 
        \\
        &
        \text{s.t. }
        &
        \quad
        \sum_{i=1}^{|\cY|^\ell} p_i \cdot  \mathbbm{1}_{\Delta_\epsilon}(\delta_i) \geq 1-\epsilon ,
        &
        \\
        &
        &
        \quad
        ( \theta-\delta_i) \geq 0, 
        & \text{ for } \delta_i\in\Delta_\epsilon
\end{aligned}
\end{equation}
where $\Delta_\epsilon\subseteq\Delta$, with $\mathbb{P}(\Delta_\epsilon)\geq1-\epsilon$, $\mathbf{1}_{|\cY|^\ell}$ is a column vector of ones with length $|\cY|^\ell$, $\mathbbm{1}_{\Delta_\epsilon}(\cdot)$ is the indicator function.
%
%
The program can be interpreted as follows. Given a threshold $\epsilon$, find the maximum number of $\delta_i$ such that the sum of their probabilities is smaller than $\epsilon$.  
Recalling Example \ref{ex:chance-constrained}, given $\epsilon = 0.05$, we shall find a set of events $\Delta_\epsilon$ such that its probability mass is not smaller than $1-\epsilon$. 
From the event space $\Delta = \{aa, bc, \ldots eb\}$ we then discard $aa$, since its probability is 0.04. 
We can then build an abstraction composed of the four remaining $\ell$-sequences, which accounts for a cumulative probability $\mathbb{P}(\Delta_{\epsilon}) \geq 1 - \epsilon$. 

The optimal solution of \eqref{eq:chance-constrained-prog} is denoted $(\theta^*, \Delta_\epsilon^*)$, where  $\theta^*$ represents a vector encoding the $\ell$-sequences that satisfy the probabilistic constraint  $\mathbb{P}(\Delta_\epsilon)\geq1-\epsilon$. We can thus construct an approximate S$\ell$-CA, where the states derive from $\theta^*$, and the transitions are governed by the domino rule.
In fact, if the $i$-th row of $\theta^*$ is non-zero ($\theta^*\in\{0,1\}^{|\cY|^\ell}$), then $p_i>0$, and $\ty_{\ell_i}$ is part of the state set of $\cS_\ell^\epsilon$. Moreover, the domino completion might be necessary if the approximate S$\ell$-CA has blocking states.  Note that there could be more than one optimal solution, since $\Delta$ is a discrete (and finite) set -- in that case, we choose the smallest $\theta^*$ in lexicographic order. 
The following result follows trivially:
\begin{proposition}\label{prop:chance-constrained-prog}
Given $\epsilon$, for any optimal solution $(\theta^*,\Delta_\epsilon^*)$ to (\ref{eq:chance-constrained-prog}) and the corresponding approximate S$\ell$-CA, $\cS_{\ell}^\epsilon$, for a new initial condition $x_0$ sampled from $\mathcal{P}$ it holds that
\begin{equation}
\label{eq:chance-constrained-simulation}
    \mathbb{P}\left[\cB_H(\cS(x_0))\in \cB_H(\cS_{\ell}^\epsilon)\right]\geq 1-\epsilon,
\end{equation}
where $\cB_H(\cS(x_0))$ denotes the $H$-long behaviour exhibited by $\cS$, i.e. the embedding of $\Sigma_s$, starting from $x_0$, and $\cB_H(\cS_{\ell}^\epsilon)$ denotes the set of all $H$-long behaviours of $\mathcal{S_\ell^\epsilon}$. 
\end{proposition}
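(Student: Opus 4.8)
The plan is to unwind the definitions and reduce the claim to the chance constraint of \eqref{eq:chance-constrained-prog}. I work in the regime $\ell=H$ for which the program is stated; the general case $H>\ell$ follows the same skeleton applied window-by-window. First I would observe that, since $f(\cdot)$ is deterministic and $\ell=H$, each sampled initial condition $x_0\sim\mathcal{P}$ produces a single $H$-long trace: $\cB_H(\cS(x_0))=\ty_{\ell_i}$ exactly when $x_0\in[\ty_{\ell_i}]$, and the associated random vector $\delta$ is then the one-hot vector $\delta^{(i)}$. Thus the quantity in \eqref{eq:chance-constrained-simulation} is simply the probability that $x_0$ lands in an equivalence class whose $\ell$-sequence is realised as a behaviour of $\cS_\ell^\epsilon$.

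Next I would establish the structural fact that every state label of $\cS_\ell^\epsilon$ occurs among its $H$-long external behaviours. This is where the construction does the work: reading the outputs $\cH(\cdot)$ along any length-$\ell$ domino run starting from a state $y_0\ldots y_{\ell-1}$ reproduces exactly the symbols $y_0,\ldots,y_{\ell-1}$, i.e. the state label itself, regardless of how the run is continued, because the domino rule forces $\cH$ of the $j$-th visited state to equal $y_j$. Since the domino completion renders $\cS_\ell^\epsilon$ non-blocking, such a run of length $\ell$ exists from every state, so every state label lies in $\cB_H(\cS_\ell^\epsilon)$. The feasibility constraint $(\theta^*-\delta^{(i)})\geq 0$ for $\delta^{(i)}\in\Delta_\epsilon^*$ forces $\theta^*$ to carry a $1$ in each selected coordinate, and those coordinates are by definition precisely the sequences retained as states of $\cS_\ell^\epsilon$. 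Consequently every $\ty_{\ell_i}$ with $\delta^{(i)}\in\Delta_\epsilon^*$ belongs to $\cB_H(\cS_\ell^\epsilon)$.

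Combining the two steps, the event $\{\cB_H(\cS(x_0))\in\cB_H(\cS_\ell^\epsilon)\}$ contains the event $\{x_0\in\bigcup_{i:\,\delta^{(i)}\in\Delta_\epsilon^*}[\ty_{\ell_i}]\}$, whose probability is $\sum_{i:\,\delta^{(i)}\in\Delta_\epsilon^*}p_i=\mathbb{P}(\Delta_\epsilon^*)$ by the categorical law of $\delta$. The chance constraint of \eqref{eq:chance-constrained-prog} guarantees $\mathbb{P}(\Delta_\epsilon^*)\geq 1-\epsilon$, so monotonicity of probability yields \eqref{eq:chance-constrained-simulation}.

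I expect the only genuine content to be the structural observation of the second paragraph, namely that the retained $\ell$-sequences really do appear as length-$H$ traces of the completed abstraction; everything else is bookkeeping on the distribution of $\delta$ and a direct reading of the chance constraint, which is presumably why the authors call the result trivial. For $H>\ell$ the same argument applies, but inclusion of a trajectory's behaviour now requires \emph{all} of its $\ell$-windows to be present as states, so the step that needs care is verifying that the chance constraint is controlling exactly the probability of drawing an $x_0$ whose entire $H$-trajectory decomposes into retained windows, rather than the probability of a single retained sequence.
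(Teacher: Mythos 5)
Your proof is correct and matches what the paper intends: the paper gives no explicit proof (it states that the proposition ``follows trivially'' from the construction of $\cS_\ell^\epsilon$ out of $\theta^*$), and your argument---the retained coordinates of $\theta^*$ become states, the domino rule together with non-blocking completion makes each retained $\ell$-sequence appear as an $H$-long trace, and the chance constraint supplies $\mathbb{P}(\Delta_\epsilon^*)\geq 1-\epsilon$---is exactly that claimed triviality spelled out. Your closing caveat about the $H>\ell$ case is also consistent with the paper, which likewise restricts the formulation to $\ell=H$ and asserts without detail that ``the same reasoning applies.''
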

Proposition \ref{prop:chance-constrained-prog} states that the set of $H$-long behaviours which belong to $\cS$ but do not belong to $\cS_\ell^\epsilon$ have a probability measure not bigger than $\epsilon$. 
As outlined in Definition \ref{def:prob-behave-incl},
we can rewrite \eqref{eq:chance-constrained-simulation} as
\begin{equation*}
    \mathbb{P}[\mathcal{S}\preceq_{\cB_H} \mathcal{S}_\ell^\epsilon]\geq 1-\epsilon.
\end{equation*}
Since the flow $f(\cdot)$ is assumed to be completely unknown, it is not possible to solve (\ref{eq:chance-constrained-prog}) exactly. We resort to scenario theory to find an approximate solution to (\ref{eq:chance-constrained-prog}): as both $\theta$ and $\delta$ take values over a discrete set, we refer to the general scenario theory \cite{garatti2021risk}. 

\smallskip

Let us sample $N$ i.i.d. initial conditions $\{x_{0,i}\}_{i=1}^N$ in the dynamical system, and consider the resulting $H$-long behaviours displayed by $\cS$, denoted by $\{\cB_H(x_{0,i})\}_{i=1}^N$. We directly obtain $N$ i.i.d scenarios $\{\delta_i\}_{i=1}^N$ where 
\begin{equation*}
    \delta_i(j) =  
    \begin{cases}
    1 \text{ if } \cB_H(x_{0, i})\models\Diamond\mathtt{y}_{\ell_j}, \\
    0 \text{ else },
    \end{cases}
\end{equation*}
for $j\in\{1,...,|\cY|^\ell\}$. 
We define formally the scenario program as
\begin{equation}
\label{eq:scenario-worst-case-problem}
    \begin{aligned}
        &
        \min_{\theta \in \Theta} 
        & \mathbf{1}_{|\cY|^\ell}^{\text{T}} \cdot \theta
        &
        \\
        &
        s.t. 
        &
        ( \theta-\delta_i) \geq 0, 
        & 
        \quad i=1, \ldots, N.
    \end{aligned}
\end{equation}
The solution $\theta^*_N$ is trivially unique and  in practice indicates which $\ell$-sequences were witnessed in the samples collected; the solution changes solely when we collect a new value for $\delta_i$, previously unseen. 
Then, if $\ell=H$, the complexity  $s^*_N$ is equal to the number of 1's in the vector $\theta^*_N$, or in other words, $s^*_N$ is equal to the number of different $\ell$-sequences exhibited by the $N$ $H$-sequences. If $H>\ell$, the complexity is equal to the cardinality of the smallest subset of the $N$ $H$-sequences collected that yield the same solution to $\theta^*_N$. 

\begin{remark}
\label{rem:support-size}
We interpret program \eqref{eq:scenario-worst-case-problem} as the collection of labels (i.e. the $\ell$-sequences of partitions of the concrete model) from a discrete probability distribution of unknown support size. 
The scenario theory provides a bound to the probability of collecting a new, unseen, label from the unknown distribution. In this sense, we upper bound the cumulative probability mass of the unseen support of a finite probability mass function from a set of i.i.d. realizations.
As a second step, we employ the collected labels to construct a data-driven abstraction, which acquires the scenario probability guarantees. 
%
\hfill $\square$
\end{remark}
We equip a data-driven \slca \ with PAC guarantees as follows. 
\begin{proposition}
\label{prop:behav-incl-horizon}
Consider a confidence $\beta$, and  
    $N$ trajectories of length $H$ collected from \eqref{eq:sys-probabilistic-init},
    and 
    the corresponding data-driven S$\ell$-CA  $\cS^N_\ell$ based on the observed $\ell$-sequences. 
    For a new initial condition $x_0$ sampled from $\cD$ it holds that
    \begin{equation}
    \label{eq:scenario-simulation-time-H}
        \mathbb{P}^N [
    \mathbb{P}[\cB_H(\cS_{\Sigma_s}(x_0))\in \cB_H(\cS^N_\ell)
    ]
     \geq 1 - \epsilon(s^*_N, N, \beta)
     \,
    ]
    \geq 1 - \beta, 
    \end{equation}
    where $\cB_H(\cS_{\Sigma_s}(x_0))$ denotes the $H$-long behaviour exhibited by $\mathcal{S}_{\Sigma_s}$ starting from $x_0$ and $\cB_H(\cS_{\ell}^N)$ denotes the set of all $H$-long behaviours of $\mathcal{S}_\ell^N$.
\end{proposition}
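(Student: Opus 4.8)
The plan is to recognize \eqref{eq:scenario-worst-case-problem} as a concrete instance of the general (degenerate) scenario program of Section~\ref{subsec:scenario-background} and then to invoke Theorem~\ref{theo:scenario-gurantees} essentially verbatim; the only genuine work is translating the scenario-theoretic violation event into the behavioural-exclusion event of \eqref{eq:scenario-simulation-time-H}. First I would fix the sample space $\Delta$ to be the finite set of multi-hot vectors $\delta$ induced by $x_0\sim\mathcal{P}$ as in \eqref{eq:delta-one-hot-vector}, the decision space $\Theta=\mathbb{R}^{|\cY|^\ell}$, and the constraint $\Theta_\delta=\{\theta:\theta-\delta\geq 0\}$. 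Since the $N$ initial conditions are drawn i.i.d. from $\mathcal{P}$, the induced $\{\delta_i\}_{i=1}^N$ are i.i.d. scenarios, so the hypotheses of Theorem~\ref{theo:scenario-gurantees} hold. Under this identification the minimiser of \eqref{eq:scenario-worst-case-problem} is exactly $\theta^*_N$, whose $j$-th entry equals $1$ precisely when $\ty_{\ell_j}$ was observed as a window of some collected trajectory; the complexity $s^*_N$ is, as already noted after \eqref{eq:scenario-worst-case-problem}, the number of distinct observed $\ell$-sequences (for $\ell=H$) or the smallest reproducing trajectory subset (for $H>\ell$), so $\epsilon(s^*_N,\beta,N)$ is well defined.

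The first substantive step is to read off the violation probability~\eqref{eq:viol-prob}: $V(\theta^*_N)=\mathbb{P}[\delta:\theta^*_N\notin\Theta_\delta]$, and $\theta^*_N\notin\Theta_\delta$ holds iff $\delta$ carries a $1$ in a coordinate where $\theta^*_N$ carries a $0$. By construction of $\delta$ this is exactly the event that a fresh $x_0$ produces an $H$-behaviour one of whose length-$\ell$ windows, $\ty_{\ell_j}$, was never collected.

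The core step, where I expect the real difficulty, is the reconstruction (or $\ell$-completeness) claim $\{\cB_H(\cS_{\Sigma_s}(x_0))\notin\cB_H(\cS_\ell^N)\}\subseteq\{\theta^*_N\notin\Theta_\delta\}$. I would argue the contrapositive: if every length-$\ell$ window of the fresh behaviour $\ty_H=\cB_H(\cS_{\Sigma_s}(x_0))$ already appears among the observed sequences, then $\ty_H\in\cB_H(\cS_\ell^N)$. This rests on two properties of Definition~\ref{def:sl-ca}: consecutive windows of any genuine trajectory overlap in $\ell-1$ symbols and therefore automatically form an admissible edge of $\cE_\ell$ (the domino rule), so the windows of $\ty_H$ chain into a run of $\cS_\ell^N$ whose trace is $\ty_H$; and the domino completion together with non-blockingness supplies whatever transitions are required to emit the trailing symbols of $\ty_H$ while only enlarging, never shrinking, $\cB_H(\cS_\ell^N)$. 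The delicate points are precisely the bookkeeping of how all $H$ output symbols are emitted along the run (the trailing $\ell-1$ symbols carried inside the last window state) and the verification that adding completion states preserves the inclusion; these are the places where the argument could break if the finite-horizon reading of $\cB_H(\cS_\ell^N)$ were mishandled.

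Finally I would assemble the bound. The event containment just established gives, pointwise in the training sample, $\mathbb{P}[\cB_H(\cS_{\Sigma_s}(x_0))\in\cB_H(\cS_\ell^N)]\geq 1-V(\theta^*_N)$, hence $\{V(\theta^*_N)\leq\epsilon(s^*_N,\beta,N)\}\subseteq\{\mathbb{P}[\cB_H(\cS_{\Sigma_s}(x_0))\in\cB_H(\cS_\ell^N)]\geq 1-\epsilon(s^*_N,\beta,N)\}$. Applying $\mathbb{P}^N$ to both sides and using Theorem~\ref{theo:scenario-gurantees} on the left yields \eqref{eq:scenario-simulation-time-H}. As a sanity check, this is the finite-sample, data-driven counterpart of Proposition~\ref{prop:chance-constrained-prog}, with the chance constraint replaced by its scenario relaxation and the prescribed $\epsilon$ by the a-posteriori $\epsilon(s^*_N,\beta,N)$.
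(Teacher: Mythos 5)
Your proof is correct and takes essentially the same route as the paper's: it bounds, via the scenario theorem, the probability that a fresh $x_0$ exhibits an unseen $\ell$-sequence, and uses the domino rule (plus domino completion, which only enlarges the behaviour set) to show that any fresh $H$-behaviour whose $\ell$-windows were all observed is reproduced by $\cS^N_\ell$. The paper's own proof is just terser, delegating your "core step" (that collecting all $\ell$-sequences yields behavioural inclusion) to a citation rather than spelling out the window-chaining and trailing-symbol bookkeeping as you do.
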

\begin{proof} 
By definition of an S$\ell$-CA, the states (i.e. the $\ell$-sequences) are the one element related to the concrete system. 
As the domino transitions are defined deterministically, we only have to ensure to collect all possible states from data. 
%
If we collect all the possible $\ell$-sequences that the system can exhibit, the behavioural relation holds, as the resulting abstraction simulates the concrete system (see e.g. \cite{de2021computing}). 
Let us now assume we collect only a subset of $\cB(\Sigma)$. 
The scenario theory guarantees that the probability of sampling an initial condition $x_0$ that generates an unseen $\ell$-sequence, over the time horizon $H$, is bounded by $\epsilon$; hence, \eqref{eq:scenario-simulation-time-H} holds.
Recall that if we sample a set of states that form a blocking TS, we apply the domino completion: this improves the behavioural inclusion, as it increases the number of behaviours. 
\hfill $\square$
\end{proof}

\begin{figure}[h]
    \centering
    \includegraphics[width=0.85\linewidth]{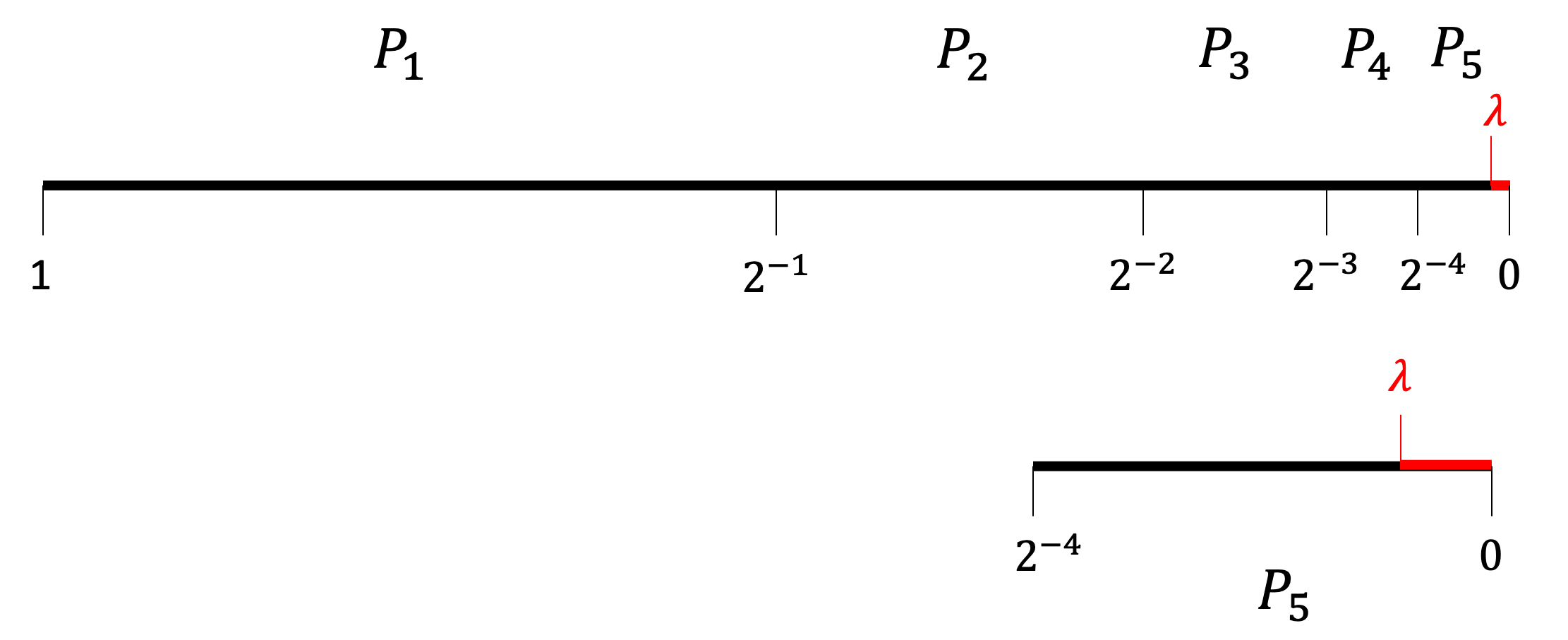}
    \caption{Partition for the state space of the dynamical system in Example \ref{exa:ell-seq-infinity}.}
    \label{fig:contracting-sys}
\end{figure}
\smallskip

As remarked in Proposition \ref{prop:chance-constrained-prog}, we denote this property by
\begin{equation}
\label{eq:data-driven-simulation-relation}
    \mathbb{P}^N [
    \mathbb{P}[\mathcal{S}\preceq_{\cB_H} \mathcal{S}_\ell^N
    ]
    \geq 1 - \epsilon(s^*_N, N, \beta)
    \,
    ]
    \geq 1 - \beta.
\end{equation}
Let us present a simple example that shows the importance of the finiteness of the horizon $H$ in Proposition \ref{prop:behav-incl-horizon}, and why the extension of \eqref{eq:data-driven-simulation-relation} to infinite horizons is challenging.

\begin{example}
\label{exa:ell-seq-infinity}
The scenario theory bounds the probability of
sampling a new, 
unseen $\ell$-sequence within horizon $H$.
Let us consider a one dimensional system, 
\begin{equation}
\label{eq:contracting-sys}
    x_{k+1}=
    \begin{cases}
    \frac{1}{2}x_k, &\text{if } x_k \in (\lambda, 1]
    \\
    \frac{1}{2}x_k + \frac{1}{2} &\text{if } x_k \in [0, \lambda]
    \end{cases}
\end{equation}
where $x_0\sim\mathcal{U}_{[0,1]}$, and $0<\lambda<2^{-4}$. 

The state space $[0, 1]$ is partitioned into five regions according to $P_i = ( 2^{-i}, 2^{-i+1}]$ for $i=1,...,4$, and $P_5 = [0, 2^{-4}]$, as shown in Fig. \ref{fig:contracting-sys}. Let us denote by $y_i$ the output of the system if the state belongs to $P_i$.
It is easy to see that this system visits infinitely many times all 5 partitions, no matter what the initial condition is. For instance, if $x_0 \in P_1$, the system generates the repeating sequence $(y_1y_2y_3y_4y_5^t)^{\omega}$, where $t$ denotes the number of repetitions of the output $y_5$ and depends on $\lambda$, i.e. the width of the window that makes the system jump back to $P_1$.

Let us now assume we sample uniformly $[0,1]$ and take $\ell=H=2$:  this is equivalent to considering one-step transitions. It is easy to see that the probability of witnessing any $y_iy_{i+1}$ is equal to $2^{-i}$ for $i = 1,...,4$. 
The probability of sampling $y_5 y_5$ is $2^{-4} - \lambda$, and 
we witness the sequence $y_5y_1$ only if $x_0 \in (0,\lambda]\subset P_5$, an event that occurs with probability $\lambda$, as summarized below: 
\begin{equation}\label{eq:contracting-sys-horizon-2}
\begin{aligned}
    &
    \mathbb{P}[\cB_2(x_0) \models \Diamond y_1 y_2] = 2^{-1},
    &
    \mathbb{P}[\cB_2(x_0) \models \Diamond y_2 y_3] = 2^{-2},
    \\
    &
    \mathbb{P}[\cB_2(x_0) \models \Diamond y_3 y_4] = 2^{-3},
    &
    \mathbb{P}[\cB_2(x_0) \models \Diamond y_4 y_5] = 2^{-4},
    \\
    &
    \mathbb{P}[\cB_2(x_0) \models \Diamond y_5 y_5] = 2^{-4} - \lambda,
    &
    \mathbb{P}[\cB_2(x_0) \models \Diamond y_5 y_1] = \lambda, 
    \\
    &
    \mathbb{P}[\cB_2(x_0) \models \Diamond y_5 y_2] = 0, 
\end{aligned}
\end{equation}
Note that $\lambda$ is a parameter of the system and can be arbitrarily small.

Let us now consider a longer horizon $H' > 2$ and focus on the sequence $y_5 y_1$. 
It is easy to see that the subsequence $y_5 y_1$ will \emph{eventually} be generated by \emph{every} trajectory, if the horizon $H'$ is long enough\footnote{By inspection, the maximum number of steps to see $y_5 y_1$ is $1+\left \lceil{-\log_2(\lambda)} \right \rceil$.}. 
Formally, for $H'\geq 1-\left \lceil{\log_2(\lambda)} \right \rceil$ it holds that 
\begin{equation*}
    \mathbb{P}[\cB_{H'}(x_0)\models \Diamond y_5y_1] = 1.
\end{equation*}
This represents a challenge for our approach: whilst the probability of witnessing $y_5 y_1$ \emph{as an initial sequence} is arbitrarily small, the probability of seeing $y_5 y_1$ over a sufficiently long horizon is actually 1. 
In general, we cannot use the scenario bounds \emph{generated from samples over time horizon $H$} to infer properties over longer time horizons. 
Finally, observe that the system exhibits the behaviour $y_5 y_2$ if and only if it is initialized exactly at $x_0 = 0$. Such event has a zero probability measure since $x_0\sim\mathcal{U}_{[0,1]}$, hence, any data-driven S$\ell$-CA will almost surely not include such behaviour.
\hfill $\square$
\end{example}

\subsection{Trajectory Horizon and Choice of $\ell$}\label{sec:tuning-parameters}

Let us offer some intuition on the role of $\ell$ and $H$.
On the one hand, we are interested in considering a large $H$ as we aim to 
extend the PAC guarantees over 
a long time horizon. 
In addition, if $H \gg \ell$, 
we expect to obtain a data-driven \slca \ containing a large portion of all the possible $\ell$-sequences, as every trajectory contains several $\ty_\ell$'s.
Further, 
a small $\ell$ limits the number of possible sequences (upper bounded by $|\cY|^\ell$), which results in more compact abstractions. 
On the other hand, 
$\ell$ can be considered a refinement parameter: a larger $\ell$ provides a finer and more precise  abstraction.
Increasing the state space typically reduces the nondeterminism of the abstraction. 
Notice that this refinement applies also in case of a black-box model, where we do not have access to the partitioning map. 
The ``classical" state discretization, i.e. gridding procedure, needs access to the actual state variables, not only to the output ones, to refine the partitioning. 

We highlight that neither $\ell$ nor $H$ appear explicitly in the computation of $\epsilon$, however, they affect the complexity $s^*_N$. 
Recall the definition of $s^*_N$ from Theorem \ref{theo:scenario-gurantees}: for a given set of $N$ scenarios $\{\delta_i\}_{i=1}^N$ and the corresponding optimal solution $\theta^*_N$, the complexity $s^*_N$ is equal to the cardinality of the smallest subset $\{\delta_{i_1},...\delta_{i_m}\}\subseteq\{\delta_i\}_{i=1}^N$ returning the same scenario program solution, that is, $\theta^*_N=\theta^*_m$ and $s^*_N = m$. The choice of $H$ is directly linked to the duration allowed for \emph{exploration} of the state space of the system. In fact, every $H$-trajectory contains between $1$ and $\min(|\cY|^\ell, H -\ell + 1)$ different $\ell$-sequences. It follows that, as we increase the gap between $H$ and $\ell$, it is likely that the complexity will decrease, since one $H$-sequence can include multiple $\ell$-sequences. 
For a fixed $N$, the lower the complexity $s^*_N$, the lower the $\epsilon (s^*_N,N,\beta)$, and therefore, the higher is the probability of behavioural inclusion. 

Notice that finding the complexity of \eqref{eq:scenario-worst-case-problem} is equivalent to solving a set cover problem \cite{caprara2000algorithms}, where the universe is given by all the witnessed $\ell$-sequences, and every covering set is obtained by considering the unique $\ell$ subsequences present in every sampled $H$-sequence. 
In practice, we use a greedy algorithm for the (unweighted) set cover problem to find an upper bound of the complexity (and hence an upper bound for $\epsilon$), see \cite{young2008greedy}. We construct the minimum cardinality of H-sequences $\{\delta_{i_1},...,\delta_{i_m}\}$ holding the same solution $\theta^*_N$, by adding iteratively the $H$-sequence containing the largest number of the witnessed $\ell$-sequences.


\section{Infinite Behaviours}
\label{sec:infinite-behav}

Proposition \ref{prop:behav-incl-horizon} states that we can construct an abstraction that behaviourally includes the concrete system, with PAC guarantees, up to the horizon $H$.
Let us now discuss how to extend the guarantees to infinite horizon properties.

First, we notice that the dynamics of the concrete system define the probability of sampling a finite trace $\ty_\ell$ at any given time: 
%
let us recall Example \ref{exa:ell-seq-infinity}. 
The probability of sampling a trajectory exhibiting the sub-sequence $\ty_\ell = y_5 y_1$ as an \emph{initial} subsequence is rather small ($\lambda$) but 
it increases with $H$, until it becomes 1.
%
%
Let us define the probability of encountering a trace $\ty_\ell$ at any given time $t$, as 
\begin{multline*}
    \mathbb{P} [ \,
        x_0
        : 
        \mathcal{B}^\omega(x_0)\models\Diamond \ty_\ell
    \, ]
    = 
    \mathbb{P} \left[
    \bigcup_{i=0}^\infty \Pre^i_\mathcal{D}( [ \ty_\ell ] )
    \right] 
    \\
    \geq 
    \mathbb{P} \left[
    \bigcup_{i=0}^k \Pre^i_\mathcal{D}( [ \ty_\ell ] )\right],
\end{multline*}
where 
$$
\Pre^i_\mathcal{D}(S) := \{x'\in\mathcal{D} \ | \ f^i(x') \in S, \> f^j(x')\in\mathcal{D} \text{ for all } 0\leq j\leq i \},
$$
i.e. the points in the domain whose trajectory remains within $\cD$ for all steps $j \leq i$, and the $i$-th step is within $S$. 
For simplicity we denote, for an arbitrary set $S$:
\begin{equation}
\label{eq:mu-definition}
    \mu_0^k(S) := 
    \mathbb{P} \left[
    \bigcup_{i=0}^k \Pre^i_\mathcal{D}( S )
    \right].
\end{equation}
In words, the probability of sampling an initial state $x_0$ that eventually leads to witness the trace $\ty_\ell$ corresponds to the probability of the equivalence class of $\ty_\ell$ together with all the sets \emph{eventually} leading to it, i.e. the $\Pre^k([\ty_{\ell}])$. 
This quantity grows with $k$: whilst for a small $k$ the probability of sampling $\ty_\ell$ may be negligible, for $k \to \infty$, it may reach a significantly large value.
%
To improve the probability of sampling $\ty_\ell$, we might increase the horizon $k$; whilst this increases the chance of sampling $\ty_\ell$, it also increases the computational needs of our procedure.
We use $\mu(S)$ in place of $\mu_0^0(S)$ to denote the probability measure of a set $S$.
We then introduce the following assumption.
\begin{assumption}
\label{ass:measure-attractive}
Given a system \eqref{eq:sys-probabilistic-init}, assume that a monotonically non-decreasing function $\varphi$ is known for some $k\in\mathbb{N}$, such that for all sets $S$ corresponding to arbitrary unions of equivalence classes, i.e. $S = \bigcup\limits_{j\in J}[\ty_{\ell_j}]$ with $J\subseteq\{1,2,...,|\cY|^{\ell} \}$, the following holds
\begin{equation}\label{eq:finite-to-infinte-measure}
    \mu_0^k(S)
    \geq 
    \varphi(k)
    \cdot
    \mu_0^\infty (S).
\end{equation}
This trivially implies that 
    \begin{equation*}
        \mu_0^k(S)< \epsilon \implies \mu_0^\infty(S)<\overline{\gamma} 
        := 
        \dfrac{1}{\varphi(k)} \cdot 
        \epsilon.
    \end{equation*}
\end{assumption}
In practice Assumption \ref{ass:measure-attractive} allows to link the probability measure of visiting any $\ell$-sequence's equivalence class $S$ in $k$ steps with the probability of visiting it in an infinite number of steps. Importantly, the function $\varphi(k)$ describes how the proportion of the measure assigned to the set $\bigcup_{i=0}^k \Pre^i_\mathcal{D}( S )$ with respect to the measure of $\bigcup_{i=0}^\infty \Pre^i_\mathcal{D}( S )$ changes over time, due to the dynamics of the system. Note that the function $\varphi$ necessarily needs to be monotonically non-decreasing. Being able to obtain such a bounding function $\varphi$ requires some a-priori knowledge of properties of the dynamics and the equivalence classes defined by the output map of $\Sigma$. 

Assumption \ref{ass:measure-attractive} allows us to extend the previous results 
to behaviours of arbitrary length. 
\begin{proposition}
\label{prop:behav-incl-star}
    Consider $N$ trajectories of length $H$ collected from \eqref{eq:sys-probabilistic-init},
    the data-driven S$\ell$-CA  $\cS^N_\ell$ based on the corresponding $\ell$-sequences in the sampled trajectories, and, for a given $\beta$, the bound $\epsilon(s^*, N, \beta) = \overline{\epsilon}$ resulting from solving program \eqref{eq:scenario-worst-case-problem}.  
    Let Assumption \ref{ass:measure-attractive} hold with $k = H-\ell$ and $\overline{\gamma}$ as defined therein.
    Then,
    for a new initial state $x_0$ sampled from $\mathcal{P}(\cD)$, 
    \begin{equation}
    \label{eq:scenario-simulation}
        \mathbb{P}^N [
    \mathbb{P}[ \,
    \mathcal{S}_{\Sigma_s} \preceq_{\mathcal{B}} \cS^N_\ell
    ]
     \geq 1 - \overline{\gamma}
     \,
    ]
    \geq 1 - \beta, 
    \end{equation}
    i.e. with confidence $1-\beta$,
    $\cS^N_\ell$ probably behaviourally includes the model \eqref{eq:sys-probabilistic-init} with probability not smaller than  $1 - \overline{\gamma}$.
\end{proposition}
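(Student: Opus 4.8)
The plan is to reduce the infinite-horizon inclusion to the finite-horizon guarantee already established in Proposition~\ref{prop:behav-incl-horizon}, using Assumption~\ref{ass:measure-attractive} as the bridge between the $k$-step and the $\infty$-step measures. First I would fix the realization of the $N$ sampled trajectories, and hence the abstraction $\cS^N_\ell$ with state set $\cX^N_\ell$. The key object is the set of \emph{missing} $\ell$-sequences: let $J\subseteq\{1,\dots,|\cY|^\ell\}$ index those $\ty_{\ell_j}$ with $p_j>0$ that are absent from $\cX^N_\ell$ after domino completion, and set $S=\bigcup_{j\in J}[\ty_{\ell_j}]$. Since every state of the data-driven \slca{} is also initial and the domino edges connect all overlapping $\ell$-windows, a behaviour $\cB^\omega(\cS_{\Sigma_s}(x_0))$ fails to be reproduced by $\cS^N_\ell$ precisely when the trajectory from $x_0$ exhibits, at some (possibly unbounded) time, an $\ell$-sequence in $J$. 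Equivalently $x_0\in\bigcup_{i=0}^{\infty}\Pre^i_\mathcal{D}(S)$, so that
\begin{equation*}
\bP\big[\,\mathcal{S}_{\Sigma_s}(x_0)\not\preceq_\mathcal{B}\cS^N_\ell\,\big]=\mu_0^\infty(S).
\end{equation*}

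Next I would express the finite-horizon event in the same terms. An $\ell$-window appears inside an $H$-long behaviour exactly when it occurs at one of the positions $0,1,\dots,H-\ell$; hence $\cB_H(\cS_{\Sigma_s}(x_0))\notin\cB_H(\cS^N_\ell)$ holds iff the trajectory reaches $S$ within $k:=H-\ell$ steps, i.e. $x_0\in\bigcup_{i=0}^{H-\ell}\Pre^i_\mathcal{D}(S)$, whose probability is $\mu_0^{k}(S)$. Proposition~\ref{prop:behav-incl-horizon}, instantiated with the scenario bound $\overline{\epsilon}=\epsilon(s^*,N,\beta)$ obtained from program~\eqref{eq:scenario-worst-case-problem}, then guarantees that on an event of $\bP^N$-probability at least $1-\beta$ we have $\mu_0^{k}(S)\leq\overline{\epsilon}$.

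It remains to transfer this bound from the $k$-step to the $\infty$-step measure. Because $S$ is an arbitrary union of equivalence classes, Assumption~\ref{ass:measure-attractive} applies with $k=H-\ell$ and yields, via~\eqref{eq:finite-to-infinte-measure},
\begin{equation*}
\mu_0^\infty(S)\leq\frac{1}{\varphi(k)}\,\mu_0^{k}(S)\leq\frac{1}{\varphi(k)}\,\overline{\epsilon}=\overline{\gamma}
\end{equation*}
on the same confidence-$(1-\beta)$ event. Consequently $\bP[\mathcal{S}_{\Sigma_s}\preceq_\mathcal{B}\cS^N_\ell]=1-\mu_0^\infty(S)\geq1-\overline{\gamma}$ with $\bP^N$-confidence at least $1-\beta$, which is exactly~\eqref{eq:scenario-simulation}.

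The main obstacle I anticipate is not the algebra but the two identifications that make everything line up. One must argue carefully that the \emph{same} set $S$ governs both the finite- and infinite-horizon non-inclusion events, and that domino completion only enlarges $\cX^N_\ell$ (so it can only help inclusion, as already noted in the proof of Proposition~\ref{prop:behav-incl-horizon}) rather than creating spurious missing sequences. Equally delicate is matching the scenario-theoretic quantity---the probability that a freshly sampled trajectory exhibits a previously unseen $\ell$-sequence within horizon $H$---to the precise measure $\mu_0^{H-\ell}(S)$, which requires the window-position bookkeeping $k=H-\ell$ and the domain-confinement built into $\Pre^i_\mathcal{D}$. Once these are pinned down, Assumption~\ref{ass:measure-attractive} closes the gap mechanically.
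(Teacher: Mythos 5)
Your proposal is correct and follows essentially the same route as the paper: identify the set of unseen $\ell$-sequences, equate the scenario violation probability with $\mu_0^{H-\ell}$ of the union of their equivalence classes, and apply Assumption \ref{ass:measure-attractive} with $k=H-\ell$ to lift the bound $\overline{\epsilon}$ to the infinite-horizon bound $\overline{\gamma}$ on the same confidence-$(1-\beta)$ event. The only (immaterial) difference is that you route the finite-horizon step through Proposition \ref{prop:behav-incl-horizon}, whereas the paper restates the violation probability $V(\theta^*_N)$ directly; your explicit bookkeeping of the domino completion and of the equivalence between non-inclusion and visiting the missing set is a slightly more careful rendering of the same argument.
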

\begin{proof} 
Given $N$ i.i.d. scenarios, 
let us denote $\theta^*_N$ as the optimal solution of the corresponding scenario program and 
$V(\theta^*_N)$ as the violation probability. 
$V(\theta^*_N)$ represents the probability of drawing a new initial condition $x_0$ which results in a $H$-long behaviour exhibiting one (or more) previously unseen $\ell$-sequence, i.e.
\begin{equation*}
    V(\theta^*_N) := 
    \mathbb{P}
    [x_0  : 
    \cB_H(\cS_{\Sigma_s}(x_0)) \models \Diamond\ty_\ell 
    \wedge \ty_\ell\notin\cX_\ell^N 
    ], 
\end{equation*}
where $\ty_\ell$ represents any sub-sequence of length $\ell$ in $\cB_H(\cS_{\Sigma_s}(x_0))$, the $H$-long behaviour of $\cS_{\Sigma_s}$ starting from $x_0$. 
The scenario theory assures that the violation is upper-bounded by $\bar{\epsilon}$, i.e. $\overline{\epsilon} > V(\theta^*)$,  with confidence not smaller than $1-\beta$.
Denote by $\tilde{S}$ the set of unseen $\ell$-sequences $\tilde{\mathtt{y}}_{\ell_j}$, such that $\cB^\omega(\cS_{\Sigma_s})$ exhibits $\tilde{\mathtt{y}}_{\ell_j}$ but $ \tilde{\mathtt{y}}_{\ell_j}\notin\cX_\ell^N$. 
%
The scenario theory ensures that 
\begin{equation*}
    V(\theta^*_N) 
    =
    \mathbb{P}
    [
    x_0  :  \cB_H(x_0) \models \Diamond\tilde{\mathtt{y}}_{\ell_j}, \tilde{\mathtt{y}}_{\ell_j}\in\tilde{S}
    ]
    = \mu_0^{H-\ell}
    \left( \bigcup\limits_{\tilde{\mathtt{y}}_{\ell_j}\in\tilde{S}} [\tilde{\mathtt{y}}_{\ell_j}]\right),
\end{equation*}
where $\mu^{H-\ell}_0(\cdot)$ is the probability measure of all initial conditions $x_0 \in \cD$ which exhibit any $\tilde{\ty}_{\ell_j} \in \tilde{S}$ in at most $H$ steps. Equivalently, $x_0$ must visit any $[\tilde{\ty}_{\ell_j}]$ in at most $H-\ell$ steps.
%
Let us apply Assumption \ref{ass:measure-attractive}, with $k = H-\ell$. It holds that 
\begin{equation*}
    \mu_0^\infty\left( \bigcup\limits_{\tilde{\mathtt{y}}_{\ell_j}\in\tilde{S}} [\tilde{\mathtt{y}}_{\ell_j}]\right)
    \leq 
    \frac{1}{\varphi(k)}
    \cdot \mu_0^k
    \left( \bigcup\limits_{\tilde{\mathtt{y}}_{\ell_j}\in\tilde{S}} [\tilde{\mathtt{y}}_{\ell_j}]\right)
    < 
    \frac{1}{\varphi(k)}
    \cdot \overline{\epsilon} 
    = \overline{\gamma}.
\end{equation*}
It follows that with confidence $1-\beta$, all unseen sequences $\tilde{\ty}_{\ell_j}$ have a total probability measure of being exhibited by an infinite behaviour upper-bounded by $\overline{\gamma}$, hence the behavioural relationship between the concrete model and abstraction holds with PAC bound $\overline{\gamma}$. 
\hfill $\square$
\end{proof}

\smallskip

We can think of the scenario bound $\overline{\epsilon}$ as a bound on the probability measure of encountering an unseen  $\ell$-sequence (with confidence $1-\beta$) until time $H$. Assumption \ref{ass:measure-attractive} allows us to extend the scenario guarantees from the finite to the infinite horizon.


\subsection{Systems Accepting a Bisimulation Relation}
\label{ssec:bisim}

A trivial consequence of Assumption \ref{ass:measure-attractive} is the following corollary.

\begin{corollary}\label{cor:finit=infinite}
Assume there exists $\overline{k}$ such that  $\varphi(\overline{k}) = 1$ for $\overline{k}<\infty$, then for the data-driven S$\ell$-CA $\cS_\ell^N$ constructed from $N$ trajectories of length $H =\overline{k} + \ell$ Proposition \ref{prop:behav-incl-star} holds with $\overline{\gamma} = \overline{\epsilon}$.
\end{corollary}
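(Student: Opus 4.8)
The plan is to recognize the statement as a direct specialization of Proposition~\ref{prop:behav-incl-star}, obtained by tuning the trajectory horizon $H$ so that the effective exploration depth $k = H - \ell$ coincides with the saturation point $\overline{k}$ of the function $\varphi$. No new machinery is required beyond substituting the hypothesis $\varphi(\overline{k})=1$ into the infinite-horizon bound $\overline{\gamma} = \tfrac{1}{\varphi(k)}\overline{\epsilon}$ delivered by that proposition.

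Concretely, I would first fix $H = \overline{k} + \ell$, so that the instance of Assumption~\ref{ass:measure-attractive} invoked by Proposition~\ref{prop:behav-incl-star} is the one with $k = H - \ell = \overline{k}$. Since by hypothesis $\varphi(\overline{k}) = 1$, the proposition's bound becomes $\overline{\gamma} = \tfrac{1}{\varphi(\overline{k})}\overline{\epsilon} = \overline{\epsilon}$, which is exactly the claimed equality. Applying Proposition~\ref{prop:behav-incl-star} verbatim then yields $\mathbb{P}^N[\,\mathbb{P}[\mathcal{S}_{\Sigma_s} \preceq_{\mathcal{B}} \cS^N_\ell] \geq 1 - \overline{\epsilon}\,] \geq 1 - \beta$.

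The one point worth making explicit is the interpretation that justifies why the finite-horizon scenario bound transfers verbatim to the infinite horizon. For any set $S$ that is a union of equivalence classes, the union $\bigcup_{i=0}^{\overline{k}} \Pre^i_\mathcal{D}(S)$ is contained in $\bigcup_{i=0}^{\infty} \Pre^i_\mathcal{D}(S)$, so $\mu_0^{\overline{k}}(S) \leq \mu_0^\infty(S)$ always holds. Combined with the inequality $\mu_0^{\overline{k}}(S) \geq \varphi(\overline{k})\,\mu_0^\infty(S) = \mu_0^\infty(S)$ supplied by Assumption~\ref{ass:measure-attractive}, this forces $\mu_0^{\overline{k}}(S) = \mu_0^\infty(S)$. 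In words, after $\overline{k}$ steps the pre-image measure already saturates: every $\ell$-sequence that the system can ever exhibit with positive probability is exhibited within $\overline{k}$ steps, so an $H = \overline{k}+\ell$ trajectory witnesses the same family of positive-probability $\ell$-sequences as an infinite trajectory, and the violation probability measured up to $H$ is already the infinite-horizon one.

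Since the derivation reduces to a one-line substitution once this saturation identity is noted, there is no genuine obstacle; the only care needed is to confirm that $\varphi$ never exceeds $1$ — implicit in the monotone definition of $\mu_0^k$ and in the convergence $\varphi(k)\to 1$ stipulated in Assumption~\ref{ass:measure-attractive} — so that $\varphi(\overline{k})=1$ indeed yields equality of the finite- and infinite-horizon measures rather than a mere lower bound.
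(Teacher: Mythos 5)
Your proposal is correct and follows essentially the same route as the paper: the corollary is obtained by substituting $k = H - \ell = \overline{k}$ and $\varphi(\overline{k}) = 1$ into the bound $\overline{\gamma} = \tfrac{1}{\varphi(k)}\,\overline{\epsilon}$ of Proposition~\ref{prop:behav-incl-star}. The saturation identity you make explicit, namely that $\mu_0^{\overline{k}}(S) \leq \mu_0^\infty(S)$ together with Assumption~\ref{ass:measure-attractive} forces $\mu_0^{\overline{k}}(S) = \mu_0^\infty(S)$, is precisely the observation the paper itself states as the justification.
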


Corollary \ref{cor:finit=infinite} gives a direct solution to extend finite time scenario guarantees to infinite time scenario guarantees since it provides the simulation horizon necessary to capture the transient behaviours. This follows from the observation that if $\varphi(\overline{k})=1$ then $\mu_0^{\overline{k}}([\ty_{\ell_i}]) = \mu_0^{\infty}([\ty_{\ell_i}])$ for every $\ty_{\ell_i}$.
Corollary \ref{cor:finit=infinite} is useful to provide guarantees on a broad family of models, i.e. dynamical systems accepting a bisimulation relation.

Let us assume the existence of a deterministic S$\ell$-CA $\cS_{\oell}$ for the embedding of the concrete system $\cS_{\Sigma}$: the equivalence relation $\eqref{eq:equivalence-class}$ for $\ell = \overline{\ell}$ defined on $\cS_{\Sigma}$ is a bisimulation relation if and only if $\cS_{\overline{\ell}}$ is deterministic, see \cite[Ch.~1]{belta2017formal}. 
We exploit this result to construct a data-driven S$\overline{\ell}$-CA that probably behaviourally includes the concrete system for infinite horizon, as the next proposition shows.

\begin{proposition}\label{prop:bisimulation}
Suppose there exists a deterministic S$\ell$-CA $\cS_{\overline\ell}$ for the embedding of system \eqref{eq:deterministic-sys} $\cS_{\Sigma}$ with $\ell=\overline{\ell}$, and consider the data-driven S$\ell$-CA $\cS_{\overline\ell}^N$  constructed from $N$ trajectories. 
Proposition \ref{prop:behav-incl-star} holds with $\overline{\gamma} = \overline{\epsilon}$ if either
\begin{enumerate}
    \item $H=|\cY|^{\overline{\ell}-1} + \oell - 1$,
    \item if \ $\overline{\ell}\leq H < |\cY|^{\overline{\ell}-1} + \oell - 1 $ and $\cS_{\overline\ell}^N$ is non-blocking.
\end{enumerate}
\end{proposition}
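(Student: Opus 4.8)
The plan is to reduce the infinite-horizon statement to the finite-horizon guarantee of Proposition \ref{prop:behav-incl-star} by exhibiting a finite horizon $\overline{k}$ at which the function of Assumption \ref{ass:measure-attractive} saturates, i.e. $\varphi(\overline{k}) = 1$, and then invoking Corollary \ref{cor:finit=infinite}. The engine behind such saturation is determinism: since $\cS_{\oell}$ is a deterministic S$\oell$-CA, the hypothesis that the relation \eqref{eq:equivalence-class} is a bisimulation means the $\oell$-window observed at time $t$ uniquely determines the next output, so along any single trajectory the sequence of $\oell$-windows $s_0, s_1, \ldots$ is generated by a deterministic map on the finite state set $\cX_{\oell} \subseteq \cY^{\oell}$. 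Consequently every trajectory's window sequence is eventually periodic and exhibits only finitely many distinct $\oell$-sequences; crucially, once the orbit re-enters a previously visited configuration the future is locked, so no $\oell$-sequence can appear for the first time beyond that point.

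First I would make the counting explicit. Viewing each $\oell$-window as a directed edge of the de Bruijn graph whose nodes are the $(\oell-1)$-sequences, a length-$H$ trajectory traverses $H - \oell + 1$ windows and visits $H - \oell + 2$ nodes. Choosing $H = |\cY|^{\oell-1} + \oell - 1$ makes the number of traversed windows equal to $|\cY|^{\oell-1}$, the number of nodes, so by the pigeonhole principle a node is revisited; together with the deterministic (domino) structure this forces the window sequence into its cycle within the first $H-\oell$ steps, so that every $\oell$-sequence the trajectory will ever exhibit has already appeared among the first $H-\oell+1$ windows. In the notation of Section \ref{sec:infinite-behav}, this says precisely that for every union of equivalence classes $S = \bigcup_{j\in J}[\ty_{\ell_j}]$ one has $\mu_0^{H-\oell}(S) = \mu_0^{\infty}(S)$, i.e. Assumption \ref{ass:measure-attractive} holds with $\overline{k} = H - \oell$ and $\varphi(\overline{k}) = 1$. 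Applying Corollary \ref{cor:finit=infinite} then yields $\overline\gamma = \overline\epsilon$, establishing case (1).

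For case (2) the horizon is too short to guarantee the pigeonhole closure, so I would instead let the non-blocking hypothesis on $\cS^N_{\oell}$ play the role of a certificate that the cycle has already been captured in the collected data. Concretely, I would argue that for a bisimulation the collected state set $\cX^N_{\oell}$ is forward-closed under the true dynamics: each collected window has a unique deterministic successor window, and non-blockingness guarantees that this successor is itself present in $\cX^N_{\oell}$ (otherwise that state would be blocking). Hence, conditioned on the finite-horizon event that a freshly sampled trajectory's first $H-\oell+1$ windows all lie in $\cX^N_{\oell}$ --- the event whose complement is bounded by $\overline\epsilon$ through Proposition \ref{prop:behav-incl-star} --- the trajectory can never leave $\cX^N_{\oell}$ and therefore exhibits no unseen $\oell$-sequence over the infinite horizon. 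Thus the finite-horizon inclusion event coincides with the infinite-horizon one and again $\overline\gamma = \overline\epsilon$.

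The main obstacle I anticipate is the quantitative heart of case (1): turning ``a node repeats'' into ``no new window appears afterwards''. Because a bisimulation at level $\oell$ delivers determinism at the level of windows (edges), not of $(\oell-1)$-nodes, a naive pigeonhole on nodes does not by itself close the orbit, and one must argue carefully --- using that the successor edge is a function of the current edge and that the trajectory is a single deterministic orbit --- that $|\cY|^{\oell-1}$ windows genuinely exhaust all first-occurrences; pinning down the constant exactly (rather than a looser bound) is the delicate step. For case (2) the analogous subtle point is ruling out that a spurious domino successor, rather than the true one, is what renders $\cS^N_{\oell}$ non-blocking; establishing forward-closure therefore requires leaning on the determinism of $\cS_{\oell}$ to identify the non-blocking successor with the genuine one.
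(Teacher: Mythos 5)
Your overall architecture coincides with the paper's: both cases are reduced to Corollary \ref{cor:finit=infinite} (respectively, to the finite-horizon scenario bound of Proposition \ref{prop:behav-incl-horizon}) by showing that determinism forces every trajectory's window orbit to close within $|\cY|^{\oell-1}$ steps (case 1), and that non-blockingness of $\cS^N_{\oell}$ certifies forward-closure of the collected state set under the true dynamics (case 2). However, both of your arguments rest on a structural lemma that you explicitly flag as ``the delicate step'' in case (1) and ``the analogous subtle point'' in case (2) but never prove: \emph{in a deterministic S$\oell$-CA no two states can share the same $(\oell-1)$-prefix}. Without it, your case-(1) pigeonhole on de Bruijn nodes does not close the orbit (a revisited node could be exited along a different edge, precisely the failure you anticipate), and falling back on a pigeonhole over the windows themselves only bounds the state count by $|\cY|^{\oell}$, yielding a horizon of order $|\cY|^{\oell}+\oell$ rather than the claimed $|\cY|^{\oell-1}+\oell-1$. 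Likewise, your case-(2) forward-closure needs exactly this lemma to conclude that the unique domino successor present in $\cX^N_{\oell}$ is the genuine successor rather than a spurious prefix-match. Since both halves of the proposition stand or fall with this lemma, leaving it unproved is a genuine gap, not a presentational one.

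The missing step is short, and it is the heart of the paper's proof: by determinism every state satisfies $|\mathcal{R}(\ty_{\oell_i})|=1$, and if two distinct states $\sigma k'$, $\sigma k''$ shared the prefix $\sigma\in\cY^{\oell-1}$, a state ending in $\sigma$ would have domino transitions to both, contradicting determinism; the paper runs this argument at the level of equivalence classes, using non-blockingness of $\cS_{\Sigma}$ to produce the offending second successor (if $[y y_{i_1}\ldots y_{i_{\oell-1}}]\neq[y y_{i_1}\ldots y_{i_{\oell}}]$, points in the difference force $|\mathcal{R}(\cdot)|>1$). From this injectivity the paper obtains $|\cX_{\oell}|\leq|\cY|^{\oell-1}$, hence a graph diameter $\overline{k}\leq|\cY|^{\oell-1}-1$, the identity $\mu_0^{\overline{k}}(S)=\mu_0^{\infty}(S)$ for unions of equivalence classes, and thus $\varphi(\overline{k})=1$ --- exactly the facts you assume. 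With that lemma inserted, your per-trajectory orbit-closure reformulation of case (1) and your conditioning argument in case (2) (finite-horizon inclusion implies infinite-horizon inclusion by forward invariance of $\cX^N_{\oell}$) are correct and, if anything, stated slightly more carefully than the paper's own wording.
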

\begin{proof}
See Appendix \ref{app:proof-bisim}. 
\hfill $\square$
\end{proof}
This proposition allows to link the number of steps needed to build a bisimulating abstraction, i.e. $\overline{\ell}$, with the length of the trajectories needed to obtain a data-driven S$\ell$-CA that probably behaviourally includes the concrete system. 
Note that, knowing $\overline{\ell}$ still remains a non-trivial task for general models. 
%
We now demonstrate the validity of Assumption $\ref{ass:measure-attractive}$ for the class of linear systems.



\subsection{Uncertain Affine Stable Systems}
\label{ssec:affine}

Let us consider the class of affine stable systems, that can be written without loss of generality as
\begin{equation}
\label{eq:affine-sys}
    \Sigma(x):=
    \begin{cases}
        x_{k+1} = f(x_k)=Ax_k+b, 
        \\ 
        y_k = h(x_k),
        \\
        x_0 = x,
        \end{cases}
\end{equation}
where both $A$ and the equilibrium $x_{\text{eq}}$ of the map $f$ are unknown. Note that $b$ can be computed from the relation $b=(I-A)x_{\text{eq}}$. We assume $A$ to be full-rank and to know a bound on its eigenvalues, i.e. 
$||A||_2\leq\alpha<1$ (where $||~\cdot~||_2$ denotes the induced $2$-norm).

For this class of systems, the following proposition holds. 
\begin{proposition}
\label{prop:affine-sys}
    For the affine stable system \eqref{eq:affine-sys}, where two scalars $\rho,\alpha\in\mathbb{R}$ such that $0<||A||_2\leq\alpha<1$, $\rho \geq |\text{det}(A^{-1})| > 1$ are known, and $x_0\sim\mathcal{U}_\cD$, if there exists an equivalence class $[y^*]$ such that the equilibrium $x_{\text{eq}}$ belongs to its interior, Assumption \ref{ass:measure-attractive} holds for every $k\in\mathbb{N}$, with 
\begin{equation}
\label{eq:measure-dyn-linear-stable-system}
    \varphi(k) = 
    \begin{cases}
    \min  \left( \psi(k), \rho^{k-\overline{k}}\right) 
    & \text{for }  k \leq \overline{k}-1,
    \\
    1 
    & \text{for } k \geq \overline{k},
    \end{cases} 
\end{equation}
where 
\begin{equation}
    \psi(k) = 
    \left(1 + \rho^{\overline{k}-1-k}\sum_{i=0}^{z(k)-1}\rho^{-i(k+1)}\right)^{-1},
\end{equation}
and 
where $\overline{k}=\left\lceil \log_{\alpha}\left( \frac{d_{min}}{d_{max}} \right) \right\rceil$, $z(k) = \lceil \overline{k}/(k+1) \rceil-1$, $d_{\text{min}}$ and $d_{\text{max}}$ are respectively the radius of the largest ball entirely contained in $[y^*]$ and the radius of the smallest ball entirely containing $\cD$ centered at $x_{\text{eq}}$.
\end{proposition}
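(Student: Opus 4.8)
The plan is to exploit the strict contraction toward $x_{\text{eq}}$ to show that after a finite number of steps $\overline{k}$ every surviving trajectory is trapped inside the equivalence class $[y^*]$, and then, for $k<\overline{k}$, to control the rate at which the pre-image sets $\Pre^i_\cD(S)$ accumulate probability measure. Writing $z_k=x_k-x_{\text{eq}}$, the error obeys the linear recursion $z_{k+1}=Az_k$, so $\|z_k\|_2\le\alpha^k\|z_0\|_2\le\alpha^k d_{\max}$ for every $x_0\in\cD$. Since $B(x_{\text{eq}},d_{\min})\subseteq[y^*]$ and $\alpha^{\overline{k}}d_{\max}\le d_{\min}$ by the very choice $\overline{k}=\lceil\log_\alpha(d_{\min}/d_{\max})\rceil$, at time $\overline{k}$ every trajectory that has remained in $\cD$ lies in $[y^*]$, and contraction keeps it there forever, so its output equals $y^*$ at all times $t\ge\overline{k}$. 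Consequently any $\ell$-sequence other than $(y^*)^\ell$ can only be exhibited starting at a time $i\le\overline{k}-1$, which gives $\bigcup_{i=0}^\infty\Pre^i_\cD(S)=\bigcup_{i=0}^{\overline{k}-1}\Pre^i_\cD(S)$ whenever $S=\bigcup_{j\in J}[\ty_{\ell_j}]$ does not contain $[(y^*)^\ell]$; when it does, the absorbing class is reached by every surviving trajectory by step $\overline{k}$, so $\mu_0^{\overline{k}}(S)=\mu_0^\infty(S)=1$. This already yields $\varphi(k)=1$ for $k\ge\overline{k}$ (exactly the regime of Corollary \ref{cor:finit=infinite}).

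Next I would treat $0\le k\le\overline{k}-1$. Let $D_i$ denote the set of initial states that exhibit $S$ for the first time exactly at step $i$, and $d_i=\mu(D_i)$; the $D_i$ are disjoint with $\mu_0^k(S)=\sum_{i=0}^k d_i$ and, by the step above, $\mu_0^\infty(S)=\sum_{i=0}^{\overline{k}-1}d_i$. Since $f(D_{i+1})\subseteq D_i$, we have $D_{i+1}\subseteq f^{-1}(D_i)$, and under $\mathcal{U}_\cD$ a pre-image of the linear map expands volume by exactly $|\det(A^{-1})|\le\rho$ while intersecting with $\cD$ can only decrease measure; this gives the key recursion $d_{i+1}\le\rho\,d_i$. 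Two complementary estimates then bound the ratio $\mu_0^k(S)/\mu_0^\infty(S)$ from below: bounding the unseen tail $\sum_{i=k+1}^{\overline{k}-1}d_i$ by the geometric series $d_k\sum_{j\ge1}\rho^{\,j}$ and using $\mu_0^k(S)\ge d_k$ yields the first term $\tfrac{1-\rho}{1-\rho^{\overline{k}-k}}$, while a direct volume-expansion comparison of $\mu_0^\infty(S)$ against $\mu_0^k(S)$ across the remaining $\overline{k}-k$ steps yields the second term $\rho^{\,k-\overline{k}}$. Each is a valid lower bound on the ratio for every admissible $S$, so their minimum is a valid (if conservative) closed form; one then checks it is monotonically increasing and attains $1$ at $\overline{k}$, so that Assumption \ref{ass:measure-attractive} is satisfied with the stated $\varphi$.

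I expect two points to be delicate. First, making the recursion $d_{i+1}\le\rho\,d_i$ fully rigorous requires carefully carrying the $\cD$-confinement baked into $\Pre^i_\cD$: trajectories leaving $\cD$ are sent to $y^\dagger$ and drop out of $S$, and one must argue that intersecting successive pre-images with $\cD$ preserves the determinant bound rather than breaking it. Second, and most importantly, establishing that the claimed $\varphi$ is a valid lower bound \emph{uniformly} over all unions of equivalence classes is the crux: the abstract recursion $d_{i+1}\le\rho\,d_i$ by itself admits measure profiles whose ratio undercuts the first term, so the second estimate (the $\rho^{\,k-\overline{k}}$ branch) must genuinely use the geometric structure of the pre-images, not merely the one-step measure recursion, and reconciling the two branches into the single $\min$ is where the argument must be handled with care. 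The separate bookkeeping of the absorbing class $[(y^*)^\ell]$ and the verification that $\varphi$ is monotone with $\varphi(k)\to 1$ are then routine.
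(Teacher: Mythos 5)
Your proposal follows essentially the same route as the paper's proof: the same $\overline{k}=\left\lceil \log_{\alpha}(d_{\text{min}}/d_{\text{max}})\right\rceil$, the same dichotomy between sets meeting the invariant ball $S_o=\{x \,:\, \|x-x_{\text{eq}}\|_2\leq d_{\text{min}}\}$ and sets disjoint from it, the same geometric series with ratio $\rho$ producing the branch $\frac{1-\rho}{1-\rho^{\overline{k}-k}}$, and the same $\rho^{k-\overline{k}}$ scaling for the absorbing case. The only cosmetic differences are that you establish stabilization at $\overline{k}$ by forward contraction (every surviving trajectory is trapped in $S_o\subseteq[y^*]$ after $\overline{k}$ steps), whereas the paper argues backward ($A^{-1}$ expands norms by $\alpha^{-1}$, so pre-images of sets disjoint from $S_o$ exit the ball of radius $d_{\text{max}}$, hence $\cD$, after $\overline{k}$ steps), and that your first-hitting decomposition $d_{i+1}\leq\rho\, d_i$ replaces the paper's cumulative recursion $\mu_0^{k+n}(Q)\leq\mu_0^k(Q)\sum_{i=0}^{n}\rho^i$; these are equivalent. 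Your first ``delicate point'' (the $\cD$-confinement) is benign: $\rho$ is only ever used as an \emph{upper} bound on the measure of pre-images, and intersecting with $\cD$ can only shrink them.

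The genuine flaw is the sentence claiming that ``each is a valid lower bound on the ratio for every admissible $S$, so their minimum is a valid closed form'' --- this is false, as your own closing paragraph half-concedes. Concretely, take $A=\alpha I$ on $\mathbb{R}^n$ with $\cD$ and $[y^*]$ concentric balls around $x_{\text{eq}}$ and $d_{\text{min}}/d_{\text{max}}=\alpha^{\overline{k}}$ exactly: then $\mu_0^k(S_o)/\mu_0^\infty(S_o)=\rho^{k-\overline{k}}$, which for $\rho=2$ and $\overline{k}-k=2$ equals $1/4<1/3=\frac{1-\rho}{1-\rho^{\overline{k}-k}}$, so the geometric-series branch fails for sets containing the absorbing class. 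The $\min$ is legitimate not because both bounds hold for every $S$, but via the case split you yourself set up: the first branch is proved \emph{only} for $S$ disjoint from $S_o$ (where $\Pre^i_\cD(S)=\emptyset$ for $i\geq\overline{k}$, so the tail stops at $\overline{k}-1$); the second branch is proved \emph{only} for sets containing $S_o$, and it needs more than a generic ``volume-expansion comparison'': the per-step bound $\mu_0^{i+1}\leq\rho\,\mu_0^{i}$ fails for general sets (one only gets $\mu_0^{i+1}\leq(1+\rho)\mu_0^{i}$) and is recovered precisely because $S_o\subset A^{-1}S_o$ makes the pre-images of $S_o$ nested; any admissible union $S\supseteq[(y^*)^\ell]\supseteq S_o$ then inherits the bound through $\mu_0^k(S)\geq\mu_0^k(S_o)\geq\rho^{k-\overline{k}}\mu_0^{\overline{k}}(S_o)=\rho^{k-\overline{k}}\mu_0^\infty(S)$, using $\Pre^{\overline{k}}_\cD(S_o)=\cD$. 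Since every admissible union of equivalence classes is of exactly one of the two types (classes of distinct $\ell$-sequences are disjoint and $S_o\subseteq[(y^*)^\ell]$), the pointwise minimum covers all cases. With that repair --- a reorganization of ingredients you already have rather than a new idea --- your argument coincides with the paper's.
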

\begin{proof}
See Appendix \ref{app:proof-affine-sys}. 
\hfill $\square$
\end{proof}
Let us offer some intuition on the above proposition. 
Denote by $S_o$ the partition containing $x_{eq}$ and
define:
\begin{equation*}
    \overline{k} = 
    \min_{T} \ x(T) \in S_o \text{ for all } t \geq T, \text{ for all } x \in \cD,
\end{equation*}
i.e. the time instant for which all trajectories reach and remain within $S_o$. 
Proposition~\ref{prop:affine-sys} captures the intuitive fact that $\overline{k}$ must depend on the ratio between the volume of $S_o$ and the volume of $\cD$, as well as on the contractivity rate, i.e. the eigenvalues, of \eqref{eq:affine-sys}. 
Notice that $\overline{k}$ exists and it is finite, as $\cD$ is finite and the system is asymptotically stable. 
Hence, the set $\Pre^k(S_o)$ at time  $k=\overline{k}$ covers the whole domain, i.e. $\cD \subseteq \Pre^{\overline{k}}(S_o)$. 

A similar reasoning applies to unstable affine systems: defining again $S_o$ as the partition containing the equilibrium point, with output symbol $y^*$. All trajectories starting at any point in $\cD \setminus S_o$ exit the domain in a finite number of steps, denoted $\overline{k}$. On the other hand, all trajectories starting in $S_o$  generate $\ell$-sequences that start with $y^*$: we denote them as $y^* \sigma_{\ell-1}$, where $\sigma_{\ell-1}\in\cY^{\ell-1}$ represent any possible output symbol sequence of length $\ell-1$. This allows us to easily establish that $\varphi(k)=1,\;\forall k\geq\overline{k}$.
One can envision a similar generalisation to classes of non-linear dynamics, based on contractivity properties, which we leave as future work. 
\begin{remark}
Note that the assumed property enabling these extensions to infinite horizons, in the previous classes of systems, amount to bounding the possible length of transients of the systems (without exact knowledge of the system itself).    
\end{remark}


\section{Experimental Evaluation}
\label{sec:experiments}

\subsection{Linear Stable System}
We first consider a linear stable model to show our approach. The system evolves according to 
\begin{equation*}
    x_{k+1} = A \, x_k, 
    \ \text{ where }
    A =
    \frac{1}{3}
    \begin{bmatrix}
    1 & 2 
    \\
    -1 & 1
    \end{bmatrix}.
\end{equation*}
The state space $\cD = [-1,1]^2$ is partitioned into 81 regions by a uniform grid. Note that in order to include the equilibrium with a squared partition we need to assess that $||A||_\infty\leq 1$. 
We sample $N=10^4$ initial states from the uniform distribution $\mathcal{U}_\mathcal{D}$ and let the trajectories run until horizon $H=4$. Let us consider $\ell$-sequences with $\ell=2$. 
We collect $189$ $\ell$-sequences and construct the corresponding abstraction. 
We set the confidence value to $\beta=10^{-12}$ and compute the scenario bounds for \eqref{eq:scenario-simulation-time-H}, which result in 
\begin{equation*}
    \epsilon(s^*, \beta, N) = \overline{\epsilon} =7.45 \cdot 10^{-9}. 
\end{equation*}
In order to extend the guarantees from horizon $H=4$ to the infinite horizon we employ Assumption \ref{ass:measure-attractive}. 
We compute $\rho = |\text{det}(A)^{-1}|$ and $\alpha=||A||_2$, which give\footnote{See Appendix \ref{app:proof-affine-sys} for a detailed derivation.} 
$\overline{k} = 9$. Since $H=4$, we evaluate the function $\varphi$ for $k=H-\ell=2$ 
and, using \eqref{eq:measure-dyn-linear-stable-system}, we get $\varphi(k)=4.57\cdot 10^{-4}$. We can thus guarantee that the abstraction holds for an infinite horizon. 
Hence, the measure of $\mu_0^\infty(S)$ is upper bounded by
\begin{equation*}
    \mu_0^\infty(S) = \mu_0^{\overline{k}}(S) < \bar{\gamma} = \frac{1}{\varphi(k)} \overline{\epsilon} = 1.62\cdot10^{-5}.
\end{equation*}
Our abstraction holds for infinite horizon properties with PAC guarantees
\begin{equation*}
    \mathbb{P}^N[ 
    \mathbb{P}[\cS \preceq_\cB \cS^N_\ell] \geq 1 - 1.62\cdot 10^{-5}
    ] \geq 1 - 10^{-12}.
\end{equation*}

\subsection{One-dimensional Hybrid System}
\label{subsec:contractive-hybrid-sys}

Let us consider the system \eqref{eq:contracting-sys} from Example \ref{exa:ell-seq-infinity} with $\lambda=10^{-2}$ and let us collect $N$ trajectories from it. 
The output of each partition $P_i$ is denoted $y_i$ for $i = [1, 5]$. 

Trivially, for $\ell=1$, the possible 1-sequences are $\{y_i \}^5_{i=1}$. The probability of sampling each $y_i$ is simply
\begin{equation*}
    \mathbb{P}[\cB_1(x_0) \models y_i] = 2^{-i}, \quad i \in [1, 5].
\end{equation*} 
Let us consider $\ell=2$; the possible 2-sequences and the probability of being sampled are reported in \eqref{eq:contracting-sys-horizon-2} where the smallest non zero probability value $\lambda$ is attained by sequence $y_5 y_1$.

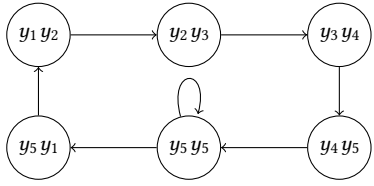
\begin{figure}
    \centering
    \begin{tikzpicture}[scale=1, transform shape]
    
    \small
    \node[state] (q0) {$y_1 y_2$};
    \node[state, right of=q0, node distance=2.0cm] (q1) {$y_2 y_3$};
    \node[state, right of=q1, node distance=2.0cm] (q2) {$y_3 y_4$};
    \node[state, below of=q2, node distance=1.5cm] (q3) {$y_4 y_5$};
    \node[state, left of=q3, node distance=2.0cm] (q4) {$y_5 y_5$};
    \node[state, left of=q4, node distance=2.0cm] (q5) {$y_5 y_1$};

    \draw[->] (q1) edge[right] node{} (q2);
    \draw[->] (q0) edge[above] node{} (q1);
    \draw[->] (q2) edge[above] node{} (q3);
    \draw[->] (q3) edge[above] node{} (q4);
    \draw[->] (q4) edge[above] node{} (q5);
    \draw[->] (q5) edge[above] node{} (q0);
    
    \draw   (q4) edge[loop above] node{} (q4);
    
\end{tikzpicture}
    \caption{Non deterministic \slca \  for the model \eqref{eq:contracting-sys}, with $\ell=2$.}
    \label{fig:contractive-slca}
\end{figure}

We collect $N=10^4$ trajectories of horizon $H = 2$,
which provide all the $2$-sequences: hence, we know that the resulting S$\ell$-CA is a valid abstraction for the concrete system. 
We construct an abstraction with a state space $\mathcal{X}_\ell^N = \{ y_1 y_2, y_2 y_3, y_3 y_4, y_4 y_5, y_5 y_1, y_5 y_5\}$, shown in Fig.\ref{fig:contractive-slca}. 
%
%
Let us set a confidence value $\beta=10^{-12}$; the scenario program has a complexity $s^*_N = 3$, and by inserting this value in \eqref{eq:scenario-simulation} we get 
\begin{equation*}
    \epsilon(s^*_N, \beta, N) = 4.80 \cdot 10^{-3},
\end{equation*}
thus the data-driven abstraction holds for horizon $H=2$ with PAC bound $\epsilon = 4.8 \cdot 10^{-3}$. To prove that the bounds hold we compute the exact probability of violation, assuming full knowledge of the system: by taking into account the probability of each sequence, it is possible to show that 
\begin{equation*}
    \mathbb{P}[V(\theta^*_N) \leq \epsilon] = \mathbb{P}[ \,
    \mathbb{P}[\mathcal{S} \preceq_{\mathcal{B}} \cS^N_\ell
    ]\geq1-\epsilon]\approx 1-2.25\cdot10^{-44},
\end{equation*}
hence, we show that the scenario bounds we obtain are conservative.

Let us now consider the infinite behaviours of system \eqref{eq:contracting-sys}. \newline
Analysing the $\Pre$ sets of the equivalence classes, we find  $\overline{k} = 7$:
this value represents the time horizon $\overline{k}$ such that $\mu^{\overline{k}}_0(S) = \mu^{\infty}_0(S) = 1$ for all sets $S$ corresponding to a (union of) equivalence classes. As discussed in Corollary \ref{cor:finit=infinite}, by sampling trajectories of length $H=\overline{k}+\ell= 9$ we can provide guarantees for the infinite horizon behaviours.
We repeat the experiment considering $N = 10^4$ trajectories with $\ell=2$ and $H$ as above. Again, we collect the 6  ${\ell}$-sequences shown above, which can be obtained from $s^*_N = 1$ trajectory; we get a new scenario bound 
\begin{equation*}
    \epsilon(s^*_N, \beta, N) = 3.47 \cdot 10^{-3}.
\end{equation*}
%
%


\subsection{Path Planning}

We consider a path planning problem, as depicted in Fig.\ref{fig:path_plan}, where an agent lies within a  $10 \times 10$ grid state space. 
It is tasked to reach the green target area -- with coordinates $[7, 8] \times [7, 9]$ -- whilst avoiding the obstacles (shown in red) and remaining within  the borders of the state space. 
The agent's initial state is chosen uniformly at random within the white area of the state space, and it  can choose among four actions (up, down, left, right) at every time step, in order to reach the target area.
First, we run a standard Q-learning algorithm \cite{watkins1992q} to train the agent, with time horizon $H=40$.

After the training, we use the newly synthesised control policy for a continuous-space experiment, where the agent can take positions over the continuous $[0, 10]^2$ domain, and its actions are obtained as a weighted average of the actions corresponding to the closest grid points. Formally, the action $a(x)$ results 
$
    a(x) = \sum_{d(x,g)<1} w_d \cdot a(g),
$
where $x$, $g$ are the locations in the continuous and grid space, respectively, $d(x, g)$ is the distance between points $x$ and $g$, and $w_d$ is a coefficient depending on the $d(x,g)$ -- the weights $w_d$ sum up to 1.
We sample the system and collect the agent's position in terms of $W, R, G$ labels (white, red, and green, respectively). 
We obtain $N=10^4$ trajectories with horizon $H=40$, and consider subsequences of length $\ell=27$. 
We collect 27 different $\ell$-sequences, which can be obtained from a total of  $s^*_N=3$ trajectories. 
By setting the confidence to $\beta = 10^{-12}$, the scenario bound evaluates at 
\begin{equation*}
    \epsilon(s^*_N, \beta, N) = 4.06 \cdot 10^{-3}.
\end{equation*}
We construct the data-driven \slca \ with the 27 $\ell$-sequences, and we verify a safety property: the system always reaches (and remains within) the target set avoiding the obstacles.
Hence,  with confidence $1-\beta$,  the concrete model always reaches the target set with probability greater or equal to $1-\epsilon$ (for a horizon $H = 40$). 

\begin{figure}
    \centering
    \vstretch{.8}{\includegraphics[width=\linewidth, trim={1cm 0.7cm 1cm 1cm}, clip=true]{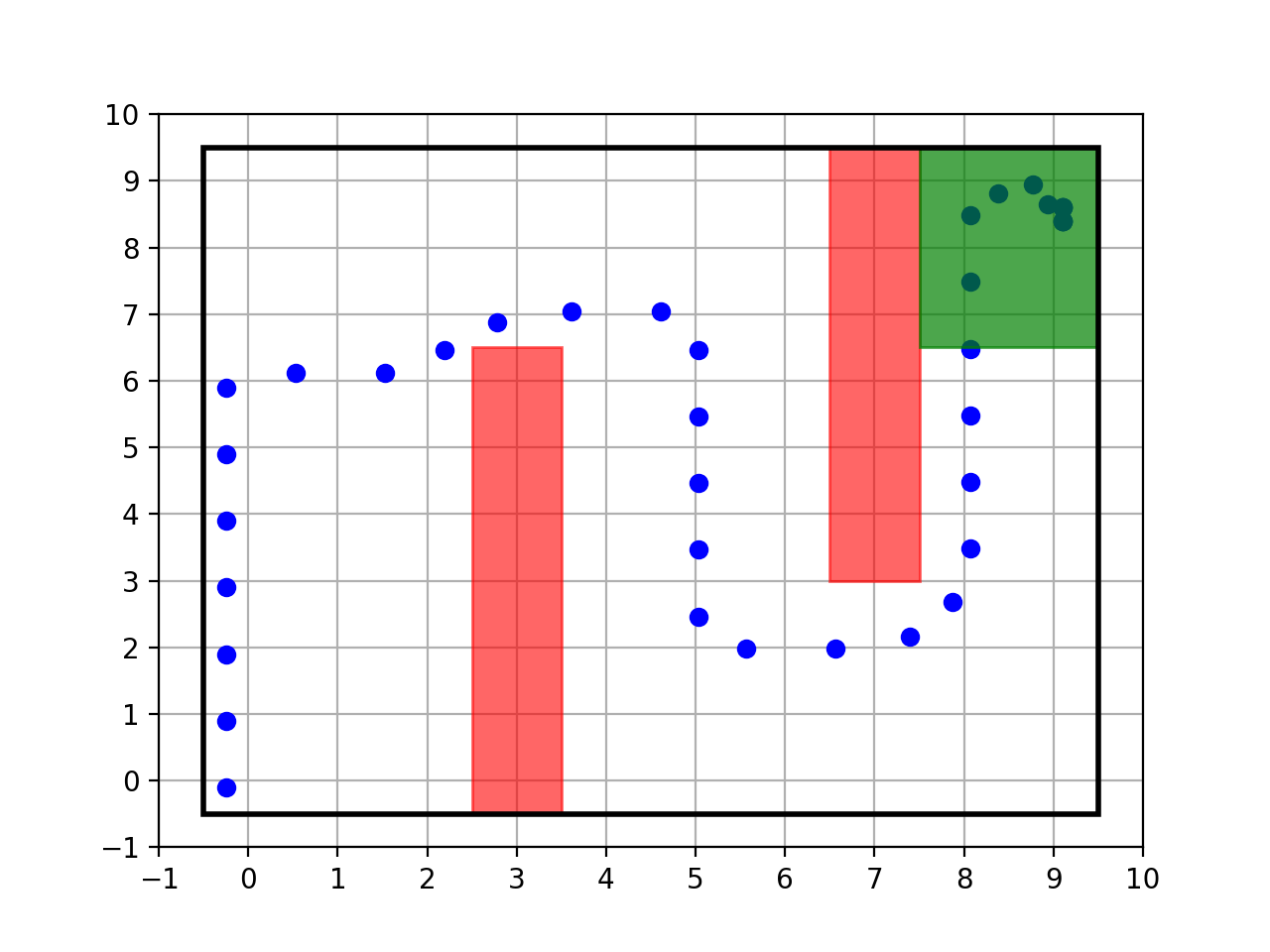}}
    \caption{Example trajectory of the path planning example (blue) with obstacles (red) and target area (green).}
    \label{fig:path_plan}
\end{figure}


\section{Conclusions and Future Work}
\label{sec:conclusion}

We have presented a method to 
construct a finite, data-driven abstraction of a deterministic system with unknown dynamics.
We introduce the notion of probabilistic behavioural inclusion, and use it to bound the probability of unseen behaviours of the concrete system. 
We then build 
an $\ell$-complete automaton that generates behaviours of the concrete system, based on trajectories up to time $H$ -- which can be useful in machine learning approaches as RL. 
We further prove that 
this construction can be used to simulate the behaviour of the system for infinite time, if the concrete system admits a bisimulation relation.
 
Our approach can be used to verify any logic specification: 
the PAC guarantees provided by our approach rely upon the collected behaviours, and whether these satisfy the desired property. As such, the concrete model complexity (e.g. nonlinearities) is irrelevant. 

%
%
The current method finds applications in the verification of unknown systems, whilst future work includes the extension to control synthesis and stochastic systems.

\begin{acks}
This work was supported by the European Research Council through the SENTIENT project (ERC-2017-STG \#755953).

We would like to thank Gabriel de Albuquerque Gleizer and Giannis Delimpaltadakis for their helpful discussions throughout the development of this work.
\end{acks}
\bibliographystyle{abbrv}
\bibliography{main}

\appendix
\section{Proofs}

\subsection{Proof of Proposition \ref{prop:bisimulation}}
\label{app:proof-bisim}

\emph{Proof of (1).} 
Consider the \emph{deterministic} S$\ell$-CA for $\cS_{\Sigma}$, denoted  $\cS_{\overline{\ell}}$. Since we assume $\cS_{\Sigma}$ to be non-blocking, necessarily $\cS_{\overline{\ell}}$ is also non-blocking.
We begin by bounding the cardinality of the state set of $\cS_{\overline{\ell}}$. 
Given a state $\ty_{\oell_i}$, consider the set of reachable states
\begin{equation*}
    \mathcal{R}(\ty_{\oell_i}) = \{\ty_{\oell}\in\cX_{\oell} \ | \ (\ty_{\oell_i},\ty_{\oell})\in\cE_{\oell}\}.
\end{equation*}
By the domino rule, the first $\oell-1$ elements of the sequence $\ty_{\oell}$ are identical to the last $\oell-1$ elements of $\ty_{\oell_i}$, and by determinism, $|\mathcal{R}(\ty_{\oell_i})| = 1$. 
Thus, there cannot exist two states $\ty_{\oell_j}, \ty_{\oell_k}$ sharing the same first $\oell-1$ elements. We conclude that the maximum cardinality of the set $\cX_{\oell}$ is equivalent to the number of possible prefixes of length $(\oell-1)$ given the output map $\cY$, which amounts to $|\cY|^{\oell-1}$. 
One example of transition system with the maximum number of states is represented by a "chain" of states, i.e. when each state has one outgoing and one incoming transition. 
%
Given $\cS_{\oell}$, 
we denote with $\overline{k}$ the \emph{diameter} of the underlying graph, i.e. the longest path connecting any two states. 
In general  $\overline{k} \leq |\cY|^{\oell-1}-1$, and this is a strict bound.
The diameter of the abstraction affects $\mu^k_0([\ty_{\oell_i}])$ and allows us to invoke Corollary \ref{cor:finit=infinite}.
Indeed, from \eqref{eq:equivalence-class}
\begin{equation*}
    \text{Pre}_\mathcal{D}([\ty_{\oell_i}]) = \text{Pre}_\mathcal{D}([y_{i_1}...y_{i_{\oell}]}) \\
    = \bigcup_{y\in S} [y \ y_{i_1}...y_{i_{\oell}}]
\end{equation*}
where $S = \{y\in\cY \ | \ \text{Pre}_\cD([\ty_{\oell_i}])\cap [y]\neq\emptyset\}$. The determinism of $\cS_{\oell}$ implies that for $y\in S$
\begin{equation*}
 [yy_{i_1}...y_{i_{\oell}}] =  [yy_{i_1}...y_{i_{{\oell}-1}}] = [\ty_{\oell_j}]
\end{equation*}
To see this, suppose there exists $\tilde{\ty}_{\oell_j} = yy_{i_1}...y_{i_{{\oell}-1}}$ such that $[\tilde{\ty}_{\oell_j}]\neq [yy_{i_1}...y_{i_{\oell}}]$. Trivially, from $\eqref{eq:equivalence-class}$, it follows that 
$[yy_{i_1}...y_{i_{\oell}}]\subseteq [\tilde{\ty}_{\oell_j}]$. Since $\cS_{\Sigma}$ is non-blocking and $[\tilde{\ty}_{\oell_j}]\setminus[yy_{i_1}...y_{i_{\oell}}]\neq \emptyset$ it follows that $|\mathcal{R}({\tilde{\ty}_{\oell_j}})|>1$ which is a contradiction.
Thus, we have shown that 
\begin{equation*}
    \text{Pre}_\mathcal{D}([\ty_{\oell_i}]) = \bigcup_{y\in S} [yy_{i_1}...y_{i_{\oell-1}}]
\end{equation*}
or, in words, that the image of the operator $\text{Pre}_\cD$ on any equivalence class $[\ty_{\oell_i}]$ consists of a union of equivalence classes of other $\oell$-sequences.
Hence, for any $\ty_{\oell_i }$
\begin{equation*}
    \mu_{0}^{k}([\ty_{\oell_i }]) = \mathbb{P} \left[
    \bigcup_{i=0}^k \Pre^i_\mathcal{D}( [\ty_{\oell_i }] )
    \right]= \mu_{0}^{\infty}([\ty_{\oell_i }]), 
    \text{ for } k \geq \overline{k}.
\end{equation*}
By inspection of the $\cS_{\oell}$, we can track the $\Pre$'s of any $[\ty_{\oell_i}]$ simply by following the abstraction's edges in the opposite direction, until we reach either an empty set or a state previously visited; the value $\overline{k}$ denotes the maximum number of steps it takes for this to happen.

Let us now consider the data-driven perspective. It is easy to see that any trajectory of length 
\begin{equation*}
    H = \overline{k} + \oell
    \leq |\cY|^{\oell-1} + \oell -1,
\end{equation*}
ensures to collect all possible $\oell$-sequences that can be generated from a given initial condition $x_0$. Thus, there exists a function $\varphi$, such that $\varphi(\overline{k}) = 1$ holds. 
By Corollary \ref{cor:finit=infinite}, for the data-driven S$\ell$-CA $\cS_{\oell}^N$ constructed from $N$ trajectories of length $H = \overline{k} + \oell$ Proposition $\ref{prop:behav-incl-star}$ holds with $\overline{\gamma} = \overline{\epsilon}$. Note that for this case $\cS_{\oell}^N$ is necessarily non-blocking, therefore domino completion is not needed.

\medskip


\emph{Proof of (2).} 
By assumption, the concrete system admits a deterministic  abstraction $\cS_{\oell}$, i.e. one outgoing transition per state. 
Again by assumption,  the data-driven $\cS^N_{\oell}$ 
obtained by sampling $N$ trajectories is deterministic and non-blocking. 
Then, the data-driven abstraction can simulate the future behaviours of \emph{all} the sampled $\ell$-sequences -- given the uniqueness of the outgoing transitions. 
Let us now consider the internal behaviour stemming from any unseen $\oell$-sequence $\tilde{\ty}_{\oell}\in\cX_{\oell}$. This can either be disjoint from $\cS^N_{\oell}$ -- i.e. the internal behaviour never reaches any $\ty_{\oell}\in\cX_{\oell}^N$-- or, in finite time, it reaches one of the collected states $\ty_{\oell}\in\cX_{\oell}^N$.
In either situations, the only way to witness $\tilde{\ty}_{\oell}$ is by \emph{sampling} them as \emph{initial} portion of a trajectory. 
This event falls under the guarantees offered by the scenario theory; formally 
\begin{multline*}
     V(\theta^*_N) = 
     \mathbb{P}\left[ x_0  : 
    \cB_H(\cS(x_0)) \models \Diamond \tilde{\ty}_{\oell} 
    \ \wedge \ \tilde{\ty}_{\oell} \notin \cX_{\oell}^N 
    \right] \\
    = \mathbb{P}\left[ x_0  : 
    \cB^\omega(\cS(x_0)) \models \Diamond \tilde{\ty}_{\oell} 
    \ \wedge \ \tilde{\ty}_{\oell} \notin \cX_{\oell}^N 
    \right]\leq \overline{\epsilon},
\end{multline*}
hence the data-driven abstraction probably behaviourally contains the concrete model with bound $\overline{\gamma} = \overline{\epsilon}$, satisfying Proposition \ref{prop:behav-incl-star}. Note that if $\cS^N_{\oell}$ is blocking after collecting $N$ trajectories of length $\overline{\ell}\leq H < |\cY|^{\overline{\ell}-1} + \oell - 1 $ it can be rendered non-blocking by domino completion. In this case $\cS^N_{\oell}$ includes \emph{all} the possible behaviours stemming from the collected $\oell$-sequences, and, in this sense, provides an over-approximation of the system's behaviours.

%


\subsection{Proof of Proposition \ref{prop:affine-sys}}
\label{app:proof-affine-sys}

\begin{proof}
Consider $[y^*]$ the equivalence class containing the equilibrium, defined by the output map $h(x)=y^*$. By assumption $x_{\text{eq}}$ is an interior point of $[y^*]$. Define the following quantities
\begin{align*}
    d_{\text{min}} &= \max\{r>0 \ | \ ||x-x_{\text{eq}}||_2 \leq r \implies x\in[y^*]\} \\
    d_{\text{max}} &= \min\{r>0 \ | \ ||x-x_{\text{eq}}||_2 \geq r \implies x\not\in\cD\}
\end{align*}
Since $||A||_2 \leq \alpha < 1$, the set $S_o = \{x\in[y^*] \ | \ ||x-x_{\text{eq}}||_2\leq d_{\text{min}}\  \}$ is such that 
\begin{equation}\label{eq:pre-inclusion}
    S_o\subset f^{-1}(S_o),
\end{equation} or equivalently, since $A$ is invertible, $f(S_o)\subset S_o$.
Moreover, for all $x \in \cD$, 
\begin{equation*}
\| f^{-1} (x-x_{\text{eq}}) \|_2 \geq \alpha^{-1} \| x-x_{\text{eq}} \|_2.
\end{equation*}
Let us consider a set $Q$ such that $Q \cap S_o = \emptyset$. For any point in $Q$, there exists a finite number of steps such that $\Pre^k(Q) \cap \cD = \emptyset$. 
Indeed, for all $x \in Q$ 
\begin{equation*}
\|f^{-k} (x-x_{\text{eq}}) \|_2 > \alpha^{-k} \cdot d_{min} \geq d_{max}, 
\text{ for all } k \geq 
\overline{k} = \left\lceil \log_{\alpha}\left( \frac{d_{min}}{d_{max}} \right) \right\rceil.
\end{equation*}
We conclude that 
\begin{equation*}
f^{-\overline{k}} (Q) \cap \cD = \emptyset, 
\end{equation*}
i.e. $\Pre^{\overline{k}}(Q)$ is entirely outside $\cD$. 
Therefore, recalling \eqref{eq:mu-definition}, it holds that
\begin{equation}
\label{eq:converge-in-H-minus-1}
    \mu^{\overline{k}-1}_0(Q) = \mu^{\overline{k}}_0(Q) = \mu^{\infty}_0(Q). 
\end{equation}
In other words, there exists a finite number of times that the $\text{Pre}(\cdot)$ operator can be applied to any set in $\cD\setminus S_o$ before the image does not intersect with the domain $\cD$. Note that if only a lower bound on $d_{\text{min}}$ and/or an upper bound on $d_{\text{max}}$ are known we obtain an upper bound for $\overline{k}$, hence the same reasoning still applies.

Let us consider the partition $S_o$ containing the equilibrium. 
Combining \eqref{eq:pre-inclusion} with the argument above, it is easy to see that 
\begin{align*}
     f^{-\overline{k}-1}(S_o) & \subset f^{-\overline{k}}(S_o),\\ f^{-\overline{k}}(S_o)\cap\cD & =\Pre^{\overline{k}}(S_o) \cap \cD = \cD
\end{align*}
We have thus proved that $\Pre_{\cD}^{\overline{k}}(S_o) = \cD$, which gives
\begin{equation*}
\label{eq:converge-in-H}
    \mu^{\overline{k}}_0(S_o) = \mu^{\infty}_0(Q). 
\end{equation*}
Additionally, since $f(S_o) \subset S_o$, each point in $S_o$ will generate the infinite behaviour $(y^*)^\omega$. In terms of equivalence classes, this means that 
\begin{multline*}
S_o \subseteq 
\{ x \in \cD \ | \ h(f^i(x)) = y^* \text{ for } 0 \leq i \leq \ell-1 \} = [y^*_{\ell}],
\end{multline*}
for all $\ell > 0$.

Next, we derive the function $\varphi$, under the assumption that $x_0\sim\mathcal{U}_\cD$.
Consider again a set $Q$ such that $Q \cap S_o = \emptyset$.
First, we recall that $\mu^1_0(Q) = 
    \mathbb{P}\left[ Q \cup \Pre_\cD(Q) \right]
    $, and that $\mu_0^0(\Pre_{\cD}(Q)) \leq |\text{det}(A)^{-1}|\mu_0^0(Q) \leq \rho \mu_0^0(Q)$. 
Then, 
\begin{equation*}
    \mu^1_0(Q) = 
    \mathbb{P}\left[ Q \cup \Pre_\cD(Q) \right]
    \leq 
    \mu(Q) (1 + \rho).
\end{equation*}
Let $q,k\in\mathbb{N}^+$ with $q\geq k$ and denote by $z(k) := \lceil (q+1)/(k+1) \rceil-1$. The set $\bigcup_{i=0}^{q}\text{Pre}_{\cD}^i(Q)$ can be recast as a union of sets in the form $\bigcup_{i=t}^{t+k}\text{Pre}_{\cD}^i(Q)$ as
\begin{equation*}
    \bigcup_{i=0}^{q}\text{Pre}_{\cD}^i(Q) = \bigcup_{i = 0}^{w-1}\left(  \bigcup_{j = q - i(k+1)-k}^{q-i\cdot(k+1)} \text{Pre}_{\cD}^{j}(Q)\right) \cup \bigcup_{j = 0}^{k} \text{Pre}_{\cD}^{j}(Q).
\end{equation*}
By the union bound (Boole's inequality) it follows directly that
\begin{equation*}
        \mu_0^q(Q)\leq \sum_{i=0}^{z(k)-1}\mu_{q-i(k+1)-k}^{q-i\cdot(k+1)}(Q) + \mu_{0}^k(Q).
    \end{equation*}
We obtain that
\begin{equation*}
    \mu_0^{q}(Q) \leq \mu_0^{k}(Q)\left(1 + \rho^{q-k}\sum_{i=0}^{z(k)-1}\rho^{-i(k+1)}\right).
\end{equation*}
Using \eqref{eq:converge-in-H-minus-1}, and setting $q=\overline{k}-1$ we use the relation above for time $\overline{k}-1$, to get 
\begin{equation}\label{eq:finite-time-diff-horizon}
    \mu^{\infty}_0(Q) =  \mu^{\overline{k}-1}_0(Q) 
    \leq 
    \mu_0^{k}(Q)\left(1 + \rho^{\overline{k}-1-k}\sum_{i=0}^{z(k)-1}\rho^{-i(k+1)}\right).
\end{equation}
where $z(k) := \lceil \overline{k}/(k+1) \rceil-1$.
Consider now the set $S_o$. Since $S_0\subset f^{-1}S_o$, we have
\begin{equation*}
\label{eq:mu0s0}
    \mu^{k+1}_0(S_o)
    \leq 
    \rho \cdot \mu^k_0(S_o) , 
    \qquad
    \mu^{k+2}_0(S_o)
    \leq 
    \rho^2 \cdot \mu^k_0(S_o) , 
\end{equation*}
and so on. Since $\mu^{\overline{k}}_0(S_o) = \mu^{\infty}_0(S_o)$, then 
\begin{equation}
\label{eq:mu-infty-rho-minus}
    \mu^{\infty}_0(S_o) = 
    \mu^{\overline{k}}_0(S_o) 
    \leq 
    \rho^{\overline{k}-k} \cdot \mu^k_0(S_o) .
\end{equation}
To conclude, we provide a bound for the union of $S_o$ with any other set $Q\in\cD$. We have 
\begin{equation*}
    \mu_0^{\overline{k}}(S_o \cup Q) = \mu_0^{\overline{k}}(S_o) = \mu(\cD) = 1.
\end{equation*}
Therefore 
\begin{equation}\label{eq:mu-infty-unions}
    \mu_0^{k}(S_o \cup Q) \geq \mu_0^{k}(S_o) \geq \mu_0^{\infty}(S_o)\rho^{k-\overline{k}} =\mu_0^{\infty}(S_o \cup Q)\rho^{k-\overline{k}}.
\end{equation}
Thus, by combining \eqref{eq:finite-time-diff-horizon}-\eqref{eq:mu-infty-rho-minus}-\eqref{eq:mu-infty-unions}, 
\begin{equation*}
\label{eq:mu-infty-rho-minimum}
    \mu^k_0(W) \geq \mu^\infty_0(W) \cdot 
    \min \left( \left(1 + \rho^{\overline{k}-1-k}\sum_{i=0}^{z(k)-1}\rho^{-i(k+1)}\right)^{-1}, \rho^{k-\overline{k}} \right).
\end{equation*}
for any $W\in\cD$ as claimed.
\hfill 
$\square$
\end{proof}


\end{document}